\definecolor{mydarkblue}{rgb}{0,0.08,0.45}
\def\nnewcolor{1}
\newcommand{\nnew}[1]{{\color{black} #1}}
\newcommand{\nnew}[1]{#1}
\newtheorem{theorem}{Theorem}
\newtheorem{lemma}{Lemma}
\newtheorem{fact}{Fact}
\newtheorem{definition}{Definition}
\numberwithin{equation}{section}
\DeclareMathOperator*{\argmax}{arg\,max}
\let\Pr\relax
\newcommand*{\Pr}{\mathrm{Pr}}
\newcommand*{\E}{\mathbb{E}}
\newcommand*{\poly}{\mathrm{poly}}
\newcommand{\Z}{\mathbb{Z}}
\newcommand{\dd}{\mathrm{d}}
\newcommand{\R}{\mathbb{R}}
\newcommand{\vol}{\mathrm{vol}}
\newcommand{\mle}{\mathop{\widehat{f}_n}}
\newcommand{\eps}{\epsilon}
\newcommand{\ch}{\mathrm{Conv}}
\newcommand{\dtv}{d_{\mathrm TV}}
\newcommand{\KL}{\mathrm{KL}}
\newcommand{\mini}[1]{\mbox{minimize} & {#1} &\\}
\newcommand{\st}{\mbox{subject to} }
\newcommand{\con}[1]{&#1 & \\}
\newenvironment{lp}{\begin{equation}  \begin{array}{lll}}{\end{array}\end{equation}}
\newenvironment{lp*}{\begin{equation*}  \begin{array}{lll}}{\end{array}\end{equation*}}
\newcommand{\eqdef}{\stackrel{{\mathrm {\footnotesize def}}}{=}}
\newcommand\blfootnote[1]{%
  \begingroup
  \renewcommand\thefootnote{}\footnote{#1}%
  \addtocounter{footnote}{-1}%
  \endgroup
}
\DeclareSymbolFont{bbold}{U}{bbold}{m}{n}
\DeclareSymbolFontAlphabet{\mathbbold}{bbold}
\g@addto@macro\normalsize{%
\setlength\abovedisplayskip{3pt}
\setlength\belowdisplayskip{3pt}
\setlength\abovedisplayshortskip{3pt}
\setlength\belowdisplayshortskip{3pt}
}
\title{A Polynomial Time Algorithm for Log-Concave Maximum Likelihood via Locally Exponential Families}
\author{%
  Brian Axelrod \\
  Department of Computer Science\\
  Stanford University\\
  \texttt{baxelrod@cs.stanford.edu} \\
  \And
  Ilias Diakonikolas\\
  Department of Computer Science\\
  University of Southern California\\
  \texttt{ilias.diakonikolas@gmail.com}
  \And
  Anastasios Sidiropoulos\\
  Department of Computer Science\\
  University of Illinois at Chicago\\
  \texttt{sidiropo@gmail.com}
  \And
  Alistair Stewart \\
  Web3 Foundation\\
  \texttt{stewart.al@gmail.com}
  \And
    Gregory Valiant \\
  Department of Computer Science\\
  Stanford University\\
  \texttt{gvaliant@stanford.edu} \\
}
\begin{document}

\maketitle

\blfootnote{Authors are in alphabetical order.}

\thispagestyle{empty}
\setcounter{page}{0}

\begin{abstract}
We consider the problem of computing the maximum likelihood multivariate log-concave distribution for a set of points.  Specifically, we present an algorithm which, given $n$ points in $\mathbb{R}^d$ and an accuracy parameter $\eps>0$, runs in time $\poly(n,d,1/\eps),$ and returns a log-concave distribution which, with high probability, has the property that the likelihood of the $n$ points under the returned distribution is at most an additive $\eps$ less than the maximum likelihood that could be achieved via any log-concave distribution.  This is the first computationally efficient (polynomial time) algorithm for this fundamental and practically important task.  Our algorithm rests on a novel connection with exponential families: the maximum likelihood log-concave distribution belongs to a class of structured distributions which, while not an exponential family, ``locally'' possesses key properties of exponential families.  This connection then allows the problem of computing the log-concave maximum likelihood distribution to be formulated as a convex optimization problem, and solved via an approximate first-order method.  Efficiently approximating the (sub) gradients of the objective function of this optimization problem is quite delicate, and is the main technical challenge in this work.
\end{abstract}

\newpage

\section{Introduction} \label{sec:intro}



A distribution on $\R^d$ is log-concave if the logarithm 
of its probability density function is concave:

\begin{definition}[Log-concave Density] \label{def:lc}
A probability density function $f : \R^d \to \R_+$, $d \in \Z_+$, is called {\em log-concave}
if there exists an upper semi-continuous concave function $\phi: \R^d \to [-\infty, \infty)$
such that $f(x) = e^{\phi(x)}$ for all $x \in \R^d$.
We will denote by $\mathcal{F}_d$ the set of upper semi-continuous,
log-concave densities with respect to the Lebesgue
measure on $\R^d$.
\end{definition}

Log-concave densities form a broad nonparametric family
encompassing a wide range of fundamental distributions, including
the uniform, normal, exponential, logistic, extreme value,
Laplace, Weibull, Gamma, Chi and Chi-Squared, and Beta distributions (see, e.g.,~\cite{BagnoliBergstrom05}).
Log-concave probability measures have been extensively investigated in several scientific disciplines, 
including economics, probability theory and statistics, computer science, 
and geometry (see, e.g.,~\cite{Stanley:89, An:95, LV07, Walther09, SW14-survey}). 
The problem of {\em density estimation} for log-concave distributions is of central importance
in the area of non-parametric estimation (see, e.g., ~\cite{Walther09, SW14-survey, Sam17-survey})
and has received significant attention during the past decade in statistics~\cite{CSS10, 
DumbgenRufibach:09, DossW16, ChenSam13, KimSam16, BalDoss14, HW16} 
and computer science~\cite{CDSS13, CDSS14, ADLS17, CanonneDGR16, DKS16-proper-lc,
DiakonikolasKS17-lc, CDSS18}.

One reason the class of log-concave distributions has attracted this attention, both from the theoretical and practical communities, is that log-concavity is a very natural ``shape constraint,'' which places significantly fewer assumptions on the distribution in question than most parameterized classes of distributions. In extremely high-dimensional settings when the amount of available data is not too much larger than the dimensionality, fitting a multivariate Gaussian (or some other parametric distribution) to the data might be all one can hope to do.  For many practical settings, however, the dimensionality is modest (e.g., 5-20) and the amount of data is significantly larger (e.g., hundreds of thousands or millions).  In such settings, making a strong assumption on the parametric form of the underlying distribution is unnecessary---there is sufficient data to fit a significantly broader class of distributions, and log-concave distributions are one of the most natural such classes. From a practical perspective, even in the univariate setting, computing the log-concave density that maximizes the likelihood of the available data is a useful primitive, with the R implementation of Rufibach and Duembgen having over 39,000 downloads~\cite{logcondens}.  As we discuss below, the amount of data required to \emph{learn} a log-concave distribution scales exponentially in the dimension, in contrast to most parametric classes of distributions. Nevertheless, for the many practical settings with modest dimensionality and large amounts of data, there \emph{is} sufficient data to learn.  The question now is computational: how does one compute the best-fit log-concave distribution?  We focus on this algorithmic question:
\begin{center}
{\em Is there an efficient algorithm to compute the log-concave MLE for datapoints in $\R^d$?}
\end{center}
Obtaining an understanding of the above algorithmic 
question is of interest for a number of reasons. 
First, the log-concave MLE is {\em the} prototypical statistical estimator for the class, is fully automatic
(in contrast to kernel-based estimators, for example), and was very recently shown to achieve the minimax optimal sample complexity for the task of learning a log-concave distribution (up to logarithmic factors)~\cite{CDSS18, DaganK19}.  The log-concave MLE also has an intriguing geometry that is of interest from a purely theoretical standpoint~\cite{CSS10, RSU17}. 
Developing an efficient algorithm for computing the log-concave MLE is of significant theoretical interest, and would also allow this general non-parametric class of distributions to be leveraged in the many practical settings where  the dimensionality is moderate and the amount of data is large.  We refer the reader to the recent survey~\cite{Sam17-survey} for a more thorough justification for why the log-concave MLE is a desirable distribution to compute.

\subsection{Our Results and Techniques} \label{sec:results}

The main result of this paper is the first efficient algorithm to
compute the multivariate log-concave MLE.
For concreteness, we formally define the log-concave MLE:

\begin{definition}[Log-concave MLE] \label{def:mle}
Let $X_1, \ldots, X_n \in \R^d$.
The log-concave MLE, $\mle = \mle(X_1, \ldots, X_n)$, 
is the density $\mle \in \mathcal{F}_d$ which maximizes the log-likelihood 
$\ell(f) \eqdef  \sum_{i=1}^n \ln(f(X_i))$ 
over $f \in \mathcal{F}_d$.
\end{definition}

\noindent As shown in~\cite{CSS10},
the log-concave MLE $\mle$ exists and is unique. 
Our main result is the first efficient algorithm to compute it
up to any desired accuracy. 

\begin{theorem}[Main Result] \label{thm:main}
Fix $d \in \Z_+$ and $0<\eps, \tau<1$. There is an algorithm that, on input any set of points $X_1, \ldots, X_n$ in $\R^d$, and $0< \eps$, $\tau < 1$, 
runs in $\poly(n, d, 1/\eps, \log(1/\tau))$ time and with probability at least $1-\tau$ 
outputs a succinct description of a log-concave density $h^{\ast} \in \mathcal{F}_d$ such that 
$\ell(h^{\ast}) \geq \ell(\mle)-\eps.$ 
\end{theorem}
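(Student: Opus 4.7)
The plan is to build on the convex-programming formulation of the log-concave MLE due to Cule--Samworth--Stewart. Their structural theorem states that $\mle = e^{-\bar{h}_{y^*}}$, where $\bar{h}_y$ is the upper tent function --- the smallest concave majorant of the map sending $X_i$ to $-y_i$ (and set to $+\infty$ outside the convex hull of $\{X_1,\dots,X_n\}$) --- and $y^* \in \R^n$ is the unique minimizer of the convex functional
\[
\sigma(y) \;=\; \frac{1}{n}\sum_{i=1}^{n} y_i \;+\; \int_{\R^d} e^{-\bar{h}_y(x)} \, dx.
\]
Moreover, an appropriately small additive error in $\sigma$ translates into an additive $\eps$ error in log-likelihood, so the problem reduces to approximately minimizing the $n$-dimensional convex function $\sigma$ and returning a succinct description of the induced tent density.

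To solve this convex program I would run an approximate projected subgradient method over a convex region $\mathcal{K} \subset \R^n$ of diameter $\poly(n,d)$ that provably contains $y^*$; the required height bounds follow from standard moment/tail estimates for log-concave densities applied to the putative MLE. On $\mathcal{K}$, direct calculation yields a subgradient of the form
\[
\partial \sigma(y) \;\ni\; \frac{1}{n}\mathbf{1} \;-\; \E_{X \sim e^{-\bar{h}_y}/Z_y}\bigl[T(X)\bigr],
\]
where $Z_y = \int e^{-\bar{h}_y}$ and $T(x) \in \R^n$ is the vector of barycentric coordinates of $x$ with respect to the vertices of the simplex (of the triangulation induced by $\bar{h}_y$) that contains $x$. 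These $T(x)$ are exactly the sufficient statistics of the exponential family one obtains by \emph{fixing} the triangulation and varying only the heights --- the ``locally exponential family'' structure advertised in the abstract. A standard analysis of the approximate subgradient method then turns an $\eps$-accurate subgradient oracle over a $\poly(n,d)$-diameter domain into convergence to an $\eps$-minimizer of $\sigma$ in $\poly(n,d,1/\eps)$ iterations.

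The main obstacle, and the technical core of the work, is producing the approximate subgradient in polynomial time. This reduces to two subproblems for the log-concave density $e^{-\bar{h}_y}/Z_y$: approximating the normalizer $Z_y$, and estimating the expectation $\E[T(X)]$. Both tasks are handled by polynomial-time algorithms for integration and sampling from log-concave measures (in the spirit of Lov\'asz--Vempala), using $\bar{h}_y(x)$ as a value oracle; each such evaluation is itself the solution of a small linear program in the heights $\{y_i\}$. The expectation is then estimated by Monte Carlo, boosted by a median-of-means to deliver the $1-\tau$ confidence. The delicate bookkeeping --- and the place I expect the hardest work --- is verifying, uniformly over iterates $y \in \mathcal{K}$, that the tent density meets the preconditions of the sampler (support contained in a well-conditioned convex body up to exponentially small mass, bounded Lipschitz norm of $\bar{h}_y$, and adequate isoperimetric regularity), and then propagating the resulting polynomial error guarantees through the subgradient iteration and the Cule--Samworth--Stewart duality to conclude $\ell(h^{\ast}) \ge \ell(\mle) - \eps$ with probability at least $1-\tau$.
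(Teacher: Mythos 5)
Your proposal follows essentially the same route as the paper: the same convex (tent-function) formulation with the barycentric/polyhedral statistic as a ``locally exponential'' sufficient statistic, the same $\poly(n,d)$ diameter bound on the feasible heights, the same approximate projected stochastic subgradient method, and the same reduction of the subgradient oracle to polynomial-time sampling from the current tent density via an LP value oracle. The only substantive divergence is in the sampler: the paper does not invoke an off-the-shelf log-concave sampler (which, as it notes, needs a warm start) but instead builds one from scratch by estimating volumes of and sampling uniformly from the superlevel sets $L_i$ of the tent density, combining them with weights $2^{-i}$ and rejection sampling; also note that your displayed subgradient $\frac1n\mathbf 1 - \E[T(X)]$ is the one for the log-partition objective $\frac1n\sum_i y_i - \log Z_y$ (which the paper uses), not for the $\sigma$ functional with the un-logged integral that you wrote down.
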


Our algorithm does {\em not} require
that the input points $X_1, \ldots, X_n$ in $\R^d$ are i.i.d. samples from a log-concave density, 
i.e., it efficiently solves the MLE optimization problem for any input set of points.
We also note that the succinct output description of $h^{\ast}$ allows for both efficient
evaluation and efficient sampling. That is, we can efficiently approximate the density at a given point (within multiplicative accuracy), and efficient sample from a distribution that is close in total variation distance.


Recent work~\cite{CDSS18,DaganK19} has shown that the log-concave MLE is minimax optimal,
within a logarithmic factor, with respect to squared Hellinger distance. 
In particular, the minimax rate of convergence with $n$ samples 
is $\tilde{\Theta}_d{ \left( n^{-2/(d+1)} \right)}$. Combining this sample complexity bound with our Theorem~\ref{thm:main}, we obtain the first sample near-optimal and
computationally efficient {\em proper} learning algorithm for multivariate log-concave densities.
See Theorem~\ref{thm:global-loss} in Appendix~\ref{app:learning}.

\paragraph{Technical Overview} 

Here we provide an overview of our algorithmic approach.
Notably, our algorithm does {\em not} require the assumption that the input points
are samples from a log-concave distribution. It runs in $\poly(n, d, 1/\eps)$
on {\em any} set of input points and outputs an $\eps$-accurate solution to the log-concve MLE.
Our algorithm proceeds by convex optimization: 
We formulate the problem of computing the log-concave MLE of a set of $n$ points in $\R^d$ 
as a convex optimization problem that we solve via an appropriate first-order method.
It should be emphasized that one needs to overcome
several non-trivial technical challenges to implement this plan. 

The first difficulty lies in choosing the right (convex) formulation.
Previous work~\cite{CSS10} has considered a convex formulation of the problem
that inherently fails, i.e., it cannot lead to a polynomial time algorithm.
Given our convex formulation, a second difficulty arises:
we do not have direct access to the (sub-)gradients of the objective function and the naive algorithm
to compute a subgradient at a point takes exponential time. Hence, a second challenge
is how to obtain an efficient algorithm for this task. One of our main contributions is 
a randomized polynomial time algorithm to approximately compute a subgradient of the objective function. Our algorithm for this task leverages structural results on log-concave densities established in~\cite{CDSS18} combined with classical algorithmic results on approximating the volume of convex bodies and uniformly sampling from convex sets~\cite{kannan1997random, lovasz2006simulated, lovasz2006hit}.

We now proceed to explain our convex optimization formulation.
Our starting point is a key structural property
of the log-concave MLE, shown in~\cite{CSS10}: The logarithm of the log-concave MLE
$\ln \mle$, is a ``tent'' function, whose parameters are the values $y_1, \ldots, y_n$ 
of the log density at the $n$ input points $x^{(1)}, \ldots, x^{(n)}$,  
and whose log-likelihoods correspond to polyhedra.
Our conceptual contribution lies in observing that while tent distributions 
are not an exponential family, they ``locally'' retain many properties 
of exponential families (Definition~\ref{def:LE}). 
This high-level similarity can be leveraged to obtain a convex formulation of 
the log-concave MLE that is  similar in spirit to the standard convex formulation of the exponential family MLE~\cite{wainwright2008graphical}.  
Specifically, we seek to maximize the log-likelihood of the probability density function 
obtained by normalizing the \nnew{log-concave function} whose logarithm 
is the convex hull of the log densities at the samples. 
This objective function is a concave function of the parameters, 
so we end up with a (non-differentiable) convex optimization problem. The crucial observation is that
the subgradient of this objective at a given point $y$ 
is given by an expectation under the current hypothesis density at $y$.

Given our convex formulation, we would like to use a first-order method to efficiently 
find an $\eps$-approximate optimum. We note that the objective function is not differentiable everywhere, hence we need to work with subgradients. We show that the subgradient of the objective function is bounded in $\ell_2$-norm at each point, i.e., the objective function is Lipschitz. Another important structural result (Lemma~\ref{lemma:radiusbound}) allows us to essentially restrict the domain of our optimization problem to a compact convex set of appropriately bounded diameter $D = \poly(n, d)$.
This is crucial for us, as the diameter bound implies an upper bound on 
the number of iterations of a first-order method. Given the above, we can in principle use a projected subgradient method to find 
an approximate optimum to our optimization problem,
i.e., find a log-concave density whose log-likelihood is $\eps$-optimal.

It remains to describe how we can efficiently compute a subgradient of our objective function.
Note that the log density of our hypothesis 
can be considered as an unbounded convex polytope. The previous approach to calculate the subgradient in~\cite{CSS10} 
relied on decomposing this polytope into faces and obtaining a closed form 
for the underlying integral over these faces (that gives their contribution to the subgradient). 
However, this convex polytope is given by $n$ vertices in $d$ dimensions, and therefore the number of its faces can be $n^{\Omega(d)}$. So, such an algorithm 
cannot run in polynomial time. 

Instead, we note that we can use a linear program (see proof of Lemma \ref{lemma:tentlocallyexp}) to evaluate a function proportional 
to the hypothesis density at a point in time polynomial in $n$ and $d$. To use this oracle 
for the density in order to produce samples from the hypothesis density, 
we use Markov Chain Monte Carlo (MCMC) methods. In particular, we use MCMC 
to draw samples from the uniform distribution on super-level sets and estimate their volumes. 
With appropriate rejection sampling, we can use these samples to obtain 
samples from a distribution that is close to the hypothesis density. See Lemma \ref{lem:sampling}. (We note that it does not suffice to simply run a standard log-concave density sampling technique such as hit-and-run \cite{lovasz2006fast}. These random walks require a hot start which is no easier than the sampling technique we propose.)

Since the subgradient of the objective can be expressed as an expectation over this density, 
we can use these samples to sample from a distribution whose expectation is close to a subgradient. We then use stochastic subgradient descent to find an approximately optimal solution to the convex optimization problem. The hypothesis density this method outputs has log-likelihood close to the maximum.

\subsection{Related Work} \label{ssec:related}

There are two main strands of research in density estimation.
The first one concerns the learnability of  high-dimensional parametric distributions, e.g., mixtures of Gaussians. The sample complexity of learning parametric families is typically polynomial in the dimension and the challenge is to design computationally efficient algorithms. The second research strand --- which is the focus of this paper --- considers the problem of learning a probability distribution under various non-parametric assumptions on the shape of the underlying density, typically focusing on the univariate or small constant dimensional regime. There has been a long line of work in this vein within statistics since the 1950s, dating back to the pioneering work of~\cite{Grenander:56} who analyzed the MLE of a univariate monotone density. Since then, shape constrained density estimation has been an active research area 
with a rich literature in mathematical statistics and, more recently, in computer science. The reader is referred to~\cite{BBBB:72} for a summary of the early work and to~\cite{GJ:14} for a recent book on the subject. 

The standard method used in statistics for density estimation problems of this form 
is the MLE. See~\cite{Brunk:58, PrakasaRao:69, Wegman:70, 
HansonP:76, Groeneboom:85, Birge:87, Birge:87b,Fougeres:97,ChanTong:04,BW07aos, JW:09, 
DumbgenRufibach:09, BRW:09aos, GW09sc, BW10sn, KoenkerM:10aos, Walther09, 
ChenSam13, KimSam16, BalDoss14, HW16, CDSS18} for a partial list of works analyzing the MLE for various distribution families.
During the past decade, there has been a body of algorithmic work on shape constrained density estimation 
in computer science with a focus on both sample and computational efficiency~\cite{DDS12soda, DDS12stoc, DDOST13focs, CDSS13, CDSS14, CDSS14b, ADHLS15, ADLS17, DKS15, DKS15b, DDKT15, DKS16, DiakonikolasKS17-lc, DiakonikolasLS18}. The majority of this literature has studied the univariate (one-dimensional) setting which is by now fairly well-understood for a wide range of distributions. On the other hand, the {\em multivariate} setting is significantly more challenging and wide gaps in our understanding remain even for $d=2$.

For the specific problem of learning a log-concave distribution,
a line of work in statistics~\cite{CSS10, DumbgenRufibach:09, DossW16, ChenSam13, BalDoss14} has characterized the global consistency properties of the log-concave multivariate MLE.
Regarding finite sample bounds,~\cite{KimSam16, DaganK19} gave 
a sample complexity {\em lower bound} of $\Omega_d \left( (1/\eps)^{(d+1)/2} \right)$ 
for $d \in \Z_+$ that holds for {\em any} estimator, and \cite{KimSam16} gave 
a near-optimal sample complexity {\em upper bound} for the log-concave MLE for $d \leq 3$. 
\cite{DiakonikolasKS17-lc} established the first finite sample complexity upper bound for learning multivariate log-concave densities under global loss functions. Their estimator (which is different than the MLE and seems hard to compute in multiple dimensions) learns log-concave densities on $\R^d$ within squared Hellinger loss $\eps$ with $\tilde{O}_d \left( (1/\eps)^{(d+5)/2} \right)$ samples. \cite{CDSS18} showed a sample complexity upper bound of $\tilde{O}_d \left( (1/\eps)^{(d+3)/2} \right)$ for the multivariate log-concave MLE with respect to squared Hellinger loss, thus obtaining the first finite sample complexity upper bound for this estimator in dimension $d\geq 4$. Building on their techniques, this bound was subsequently improved in ~\cite{DaganK19} to a near-minimax optimal bound of $\tilde{O}_d \left( (1/\eps)^{(d+1)/2} \right)$. 
Alas, the {\em }computational complexity of the log-concave MLE has remained open in
the multivariate case. Finally, we note that a recent work~\cite{DLS19} obtained a non-proper estimator 
for multivariate log-concave densities with sample complexity $\tilde{O}_d((1/\eps)^{d+2})$ (i.e., at least quadratic in that of the MLE) and runtime $\tilde{O}_d((1/\eps)^{2d+2})$.  

On the empirical side, recent work~\cite{RS18} proposed 
a non-convex optimization approach to the problem of computing the log-concave MLE, 
which seems to exhibit superior performance in practice in comparison to previous implementations (scaling to $6$ or higher dimensions).  Unfortunately, their method is of a heuristic nature, in the sense that there is no 
guarantee that their solution will converge to the log-concave MLE.

The present paper is a merger of two independent works~\cite{AV18, DSS18}, proposing essentially the same algorithm to compute the log-concave MLE. Here we provide a unified presentation of these works with an arguably conceptually cleaner analysis.

\section{Preliminaries \label{sec:prelim}} 



\noindent {\bf Notation.} We denote by $X_1,\ldots,X_n\in \mathbb{R}^d$ the sequence of samples. We denote by $S_n = \ch(\{X_i\}_{i=1}^n)$ the convex hull of $X_1,\ldots,X_n$, and by $X$ the $d\times n$ matrix with columns vectors $X_1,\ldots,X_n$.
We write $\mathbbm{1}$ for the all-ones vector of the appropriate length.
For a set $Y\subset Z$, $\mathbbm 1_Y$ denotes the indicator function for $Y$.

\medskip

\noindent {\bf Tent Densities.}
We start by defining tent functions and tent densities:
\begin{definition}[Tent Function] \label{def:tent}
For $y = (y_1, \ldots, y_n) \in \R^n$ and a set of points $X_1, \ldots, X_n$ in $\R^d$, 
we define the tent function $h_{X,y}: \R^d \to \R$ as follows: 
\[
h_{X,y}(x) = \left\{\begin{array}{ll}
\max \{ z \in \R \textrm{ such that } (x, z) \in \ch(\{(X_i, y_i)\}_{i=1}^n)\} & \text{ if } x\in S_n\\
 -\infty & \text{ if } x\notin S_n
 \end{array}\right.
\]
\end{definition}
The points $(X_i, y_i)$ are referred to as \emph{tent poles}. 
(See Figure \ref{fig:tent} for the graph of an example tent function.)

\begin{figure}[h!]
    \centering
    \includegraphics[width=0.9\linewidth]{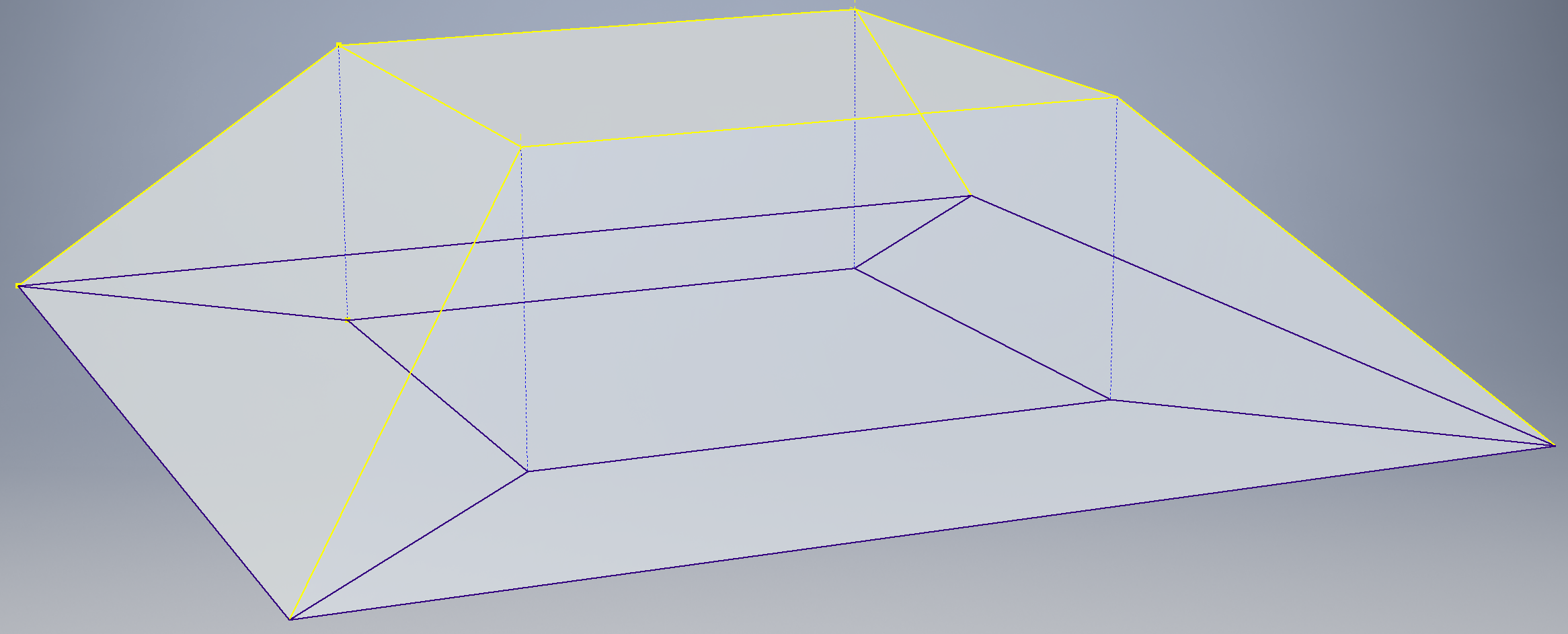}
    \caption{An example of a tent function and its corresponding regular subdivision. Notice that the regular subdivision is \emph{not} a regular triangulation.}
    \label{fig:tent}
\end{figure}

Let $p_{X,y}(x) = c\exp(h_{X,y}(x))$  with $c$ chosen such that $p_{X,y}(x)$ integrates to one. We refer to $p_{X,y}$ as a \emph{tent density} and the corresponding distribution as a \emph{tent distribution}. Note that the support of a \emph{tent distribution} must be within the convex hull of $X_1,\ldots,X_n$. For the remainder of the paper, we choose a scaling such that $\mathbbm{1}^T y = 0$. This scaling is arbitrary, and has no significant effect on either the algorithm or its analysis. 

Tent densities are notable because they contain solutions to the log-concave MLE~\cite{CSS10}. The solution to the log-concave MLE over $X_1,\ldots,X_n$ is always a tent density, because tent densities with tent poles $X_1,\ldots,X_n$ are the minimal log-concave functions with log densities $y_1,\ldots,y_n$ at points $X_1,\ldots,X_n$. 

The algorithm which we present can be thought of as an optimization over tent functions. In Section \ref{sec:suffstat}, we will show that tent distributions retain important properties of exponential families which will be useful to establish the correctness of our algorithm. 

\medskip

\noindent {\bf Regular Subdivisions.}
Given a tent function $h_{X,y}$ with $h_{X,y} (X_i) = y_i$, its associated \emph{regular subdivision} $\Delta_{X,y}$ of $X$ is a collection of subsets of $X_1, \ldots, X_n \in \mathbb R^d$ whose convex hulls are the regions of linearity of $h_{X,y}$. See Figure \ref{fig:tent} for an illustration of a tent function and its regular subdivision. We refer to these polytopes of linearity as \emph{cells}. We say that $\Delta_{X,y}$ is a \emph{regular triangulation} of $X$ if every cell is a $d-$dimensional simplex.    

It is helpful to think of regular subdivisions in the following way: Consider the hyperplane $H$ in $\mathbb R^{d+1}$ obtained by fixing the last coordinate. Consider the function $h_{X,y}$ as a polytope and project each face onto $H$. Each cell is a projection of a face, and together the cells partition the convex hull of $X_1, \ldots ,X_n$.  Observe that regular subdivisions may vary with $y$. Figure \ref{fig:changediv} provides one example of how changing the $y$ vector changes the regular subdivision.  

\begin{figure}[h!]
    \centering
    \includegraphics[width=0.9\linewidth]{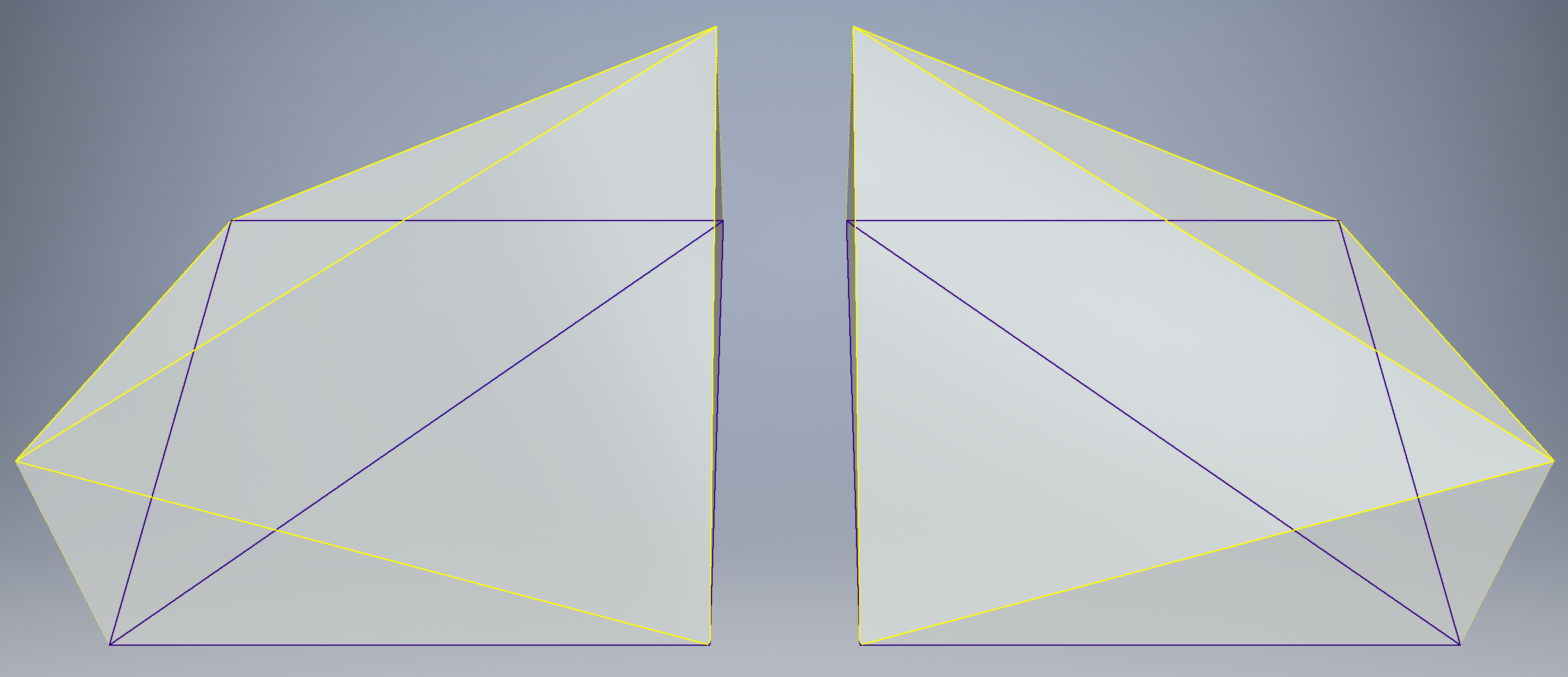}
    \caption{Changing the height of the tent poles can change the induced regular subdivision (shown in purple). }
    \label{fig:changediv}
\end{figure}

For a given regular triangulation $\Delta$, the associated \emph{consistent neighborhood} $N_{\Delta}$  is the set of all $y \in \mathbb R^n$,  such that $ \Delta_{X,y} =\Delta$.  That is, consistent neighborhoods are the sets of parameters where the \emph{regular triangulation} remains fixed. Note that these neighborhoods are open and  their closures cover the whole space. See Figure~\ref{fig:changediv} for an example of how crossing between consistent neighborhoods results in different subdivisions. 
We note that for fixed $X$, when $y$ is chosen in general position, $\Delta_{X,y}$ is always a regular triangulation. 

\section{Locally Exponential Convex Programs\label{sec:math}}
In this section, we lay the foundations for the algorithm presented in the next section. We present the ``locally" exponential form of tent distributions and show it has the necessary properties to enable efficient computation of the log-concave MLE. Though they form a broader class of distributions, ``locally" exponential distributions share some important properties of exponential families. Namely, the log-likelihood optimization is convex, and the expectation of the sufficient statistic is a subgradient. This will allows us to formulate a convex program which we will be able to solve in polynomial time.  

\begin{definition} \label{def:LE}
Let $T$ be some function (possibly parametrized by $y$) and let $q_y = \exp \left ( \langle T(x),y\rangle - A(y) \right)$ be a family of probability densities parametrized by $y$ with $A(y)$ acting to normalize the density so it integrates to $1$. We say that the family $\{ q_y \}$ is \emph{locally-exponential} if the following hold: (1) $A(y)$ is convex in $y$, and (2)  $\mathbb E_{x \sim \nnew{q}_y} [T(x)] \in \partial_y A(y)$. 
\end{definition}
Note that the above definition differs from an exponential family in that for exponential families $T$ may not depend on $y$.

In this section, we derive a sufficient statistic, the \emph{polyhedral statistic}, that shows that tent distributions are in fact locally exponential. More formally, we show:
\begin{lemma}\label{lemma:tentlocallyexp}
For tent poles $X_1, \ldots, X_n$, there exists a function $T_{X,y}: \mathbb R^d \rightarrow \mathbb R^n$ (the polyhedral statistic) such that  $p_{X,y}(x) = \exp \left ( \langle T_{X,y}(x),y\rangle - A(y) \right)$ corresponds to the family of tent-distributions such that $\{ p_{X,y} \}$ is locally exponential. Furthermore, $T_{X,y}$ is computable in time $\poly(n,d)$.
\end{lemma}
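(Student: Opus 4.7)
\textbf{Proof plan for Lemma~\ref{lemma:tentlocallyexp}.} The plan is to construct $T_{X,y}$ from the linear program that evaluates the tent function, and then verify the two conditions of Definition~\ref{def:LE} by LP duality and a Hölder-type argument. First, for any $x \in S_n$, observe that by the definition of $h_{X,y}$ we have
\begin{equation*}
h_{X,y}(x) \;=\; \max_{\lambda \in \R^n} \Bigl\{ \langle \lambda, y \rangle \;:\; \textstyle\sum_i \lambda_i X_i = x,\; \sum_i \lambda_i = 1,\; \lambda \geq 0 \Bigr\}.
\end{equation*}
Define $T_{X,y}(x)$ to be an optimal basic feasible solution $\lambda^{\ast}$ of this LP (and $T_{X,y}(x) = 0$ for $x\notin S_n$, where $h_{X,y}=-\infty$). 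Then $\langle T_{X,y}(x), y\rangle = h_{X,y}(x)$ on $S_n$, so writing $Z(y) = \int_{S_n}\exp(h_{X,y}(x))\,dx$ and $A(y) = \ln Z(y)$ we immediately get $p_{X,y}(x) = \exp(\langle T_{X,y}(x), y\rangle - A(y))$. Computability in $\poly(n,d)$ follows because the LP has $n$ variables and $d+1$ equality constraints and can be solved in polynomial time by any standard LP method.

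Next, I would establish convexity of $A$. The key observation is that for each fixed $x$, $h_{X,y}(x)$ is the maximum of a linear functional of $y$ over a polytope in $\lambda$ that does \emph{not} depend on $y$, so $h_{X,y}(x)$ is a convex (piecewise-linear) function of $y$. For any $\alpha\in[0,1]$ and $y_1,y_2\in\R^n$, convexity of $h$ in $y$ gives the pointwise bound $\exp(h_{X,\alpha y_1+(1-\alpha)y_2}(x)) \le \exp(h_{X,y_1}(x))^{\alpha}\exp(h_{X,y_2}(x))^{1-\alpha}$. Integrating over $S_n$ and applying Hölder's inequality yields
\begin{equation*}
Z\bigl(\alpha y_1 + (1-\alpha)y_2\bigr) \;\le\; Z(y_1)^{\alpha}\, Z(y_2)^{1-\alpha},
\end{equation*}
and taking logarithms gives convexity of $A$. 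This is the standard log-partition trick, adapted to the fact that the ``sufficient statistic'' depends on $y$ — the dependence is absorbed into the $\max$-over-$\lambda$ representation before Hölder is applied, which is the one mildly subtle point.

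Finally, I would verify the subgradient condition. For any $y'$,
\begin{equation*}
A(y') - A(y) \;=\; \ln \mathbb{E}_{x\sim p_{X,y}}\bigl[\exp\bigl(h_{X,y'}(x) - h_{X,y}(x)\bigr)\bigr] \;\ge\; \mathbb{E}_{x\sim p_{X,y}}\bigl[h_{X,y'}(x) - h_{X,y}(x)\bigr]
\end{equation*}
by Jensen's inequality. Because $T_{X,y}(x)$ is a feasible (though not necessarily optimal) solution to the LP defining $h_{X,y'}(x)$, we have $h_{X,y'}(x) \ge \langle T_{X,y}(x), y'\rangle$, while $h_{X,y}(x) = \langle T_{X,y}(x), y\rangle$ by construction. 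Substituting gives $A(y') - A(y) \ge \langle \mathbb{E}_{x\sim p_{X,y}}[T_{X,y}(x)],\, y' - y\rangle$, which is exactly the subgradient inequality $\mathbb{E}_{x\sim p_{X,y}}[T_{X,y}(x)] \in \partial_y A(y)$.

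The main potential obstacle is the convexity step, since $T_{X,y}$ is not fixed and a naive Hölder computation fails; the trick is to apply Hölder to the ``pre-maximization'' exponentials, using the $\max_\lambda$ representation of $h_{X,y}$. A minor point worth mentioning in the write-up is that $T_{X,y}(x)$ may be non-unique on a measure-zero set of $x$ (when multiple LP bases are optimal), but any measurable selection works since all the identities above depend only on $\langle T_{X,y}(x), y\rangle = h_{X,y}(x)$.
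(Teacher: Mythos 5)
Your proposal is correct. The construction of $T_{X,y}$ from the packing LP, the identity $h_{X,y}(x)=\langle T_{X,y}(x),y\rangle$, the $\poly(n,d)$ computability, and the convexity of $A$ all match the paper: your explicit Hölder computation is exactly the ``integral of log-convex functions is log-convex'' fact the paper cites, applied after observing that $h_{X,y}(x)$ is convex in $y$ as a maximum of linear functionals over a $y$-independent feasible set. Where you genuinely diverge is the subgradient condition. The paper verifies $\mathbb{E}_{x\sim p_{X,y}}[T_{X,y}(x)]\in\partial_y A(y)$ by exploiting the locally-exponential structure: on the interior of each consistent neighborhood the family is an honest exponential family, so the standard gradient identity (Fact~3) applies, and at boundary points it argues that the LP's solution set is the convex hull of limiting gradients from the adjacent neighborhoods. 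You instead prove the subgradient inequality directly for all $y'$ at once, via Jensen applied to $A(y')-A(y)=\ln\mathbb{E}_{x\sim p_{X,y}}[\exp(h_{X,y'}(x)-h_{X,y}(x))]$ together with the observation that $T_{X,y}(x)$ remains \emph{feasible} (though not necessarily optimal) for the LP at $y'$, so that $h_{X,y'}(x)\ge\langle T_{X,y}(x),y'\rangle$ while $h_{X,y}(x)=\langle T_{X,y}(x),y\rangle$. Your route is uniform in $y$ (no case split on interior versus boundary of consistent neighborhoods), avoids the paper's somewhat informal limiting-gradient argument at boundaries, and isolates exactly what is needed — feasibility of the old statistic in the perturbed LP. The paper's route, by contrast, makes the conceptual point of the section (that tent densities are a union of exponential families sharing a log-partition function) do the work, which is why it is presented that way. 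Your remark on measurable selection where the optimal basis is non-unique is the right caveat and is handled correctly, since all identities depend only on $\langle T_{X,y}(x),y\rangle=h_{X,y}(x)$ and on feasibility.
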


Since we know that the log-concave MLE is a tent distribution, and all tent-distributions are log-concave, we know that the optimum of the maximum likelihood convex program in Equation \eqref{eqn:optformnew} corresponds to the log-concave MLE.   

\begin{align}
    \textrm{MLE of tents } =\max\limits_{y}& \sum\limits_i h_{X,y}(X_i) - \log \int \exp h_{X,y}(x) dx     =\max\limits_{y}& \sum\limits_i y_i - A(y) \label{eqn:optformnew} 
\end{align}

Combining the above with the fact that the sufficient statistic allows us to compute the stochastic subgradient suggests that Algorithm \ref{alg:complete} can compute the log-concave MLE in polynomial time.  

\begin{algorithm}[H]
\begin{algorithmic}
\State $y \gets 0$; $c \gets 8n^2d \log (2nd)$; $m \gets \frac{2 c^2}{\epsilon^2}$\\
\For{$i\gets 1,m$} \\
    ~~~~ $\eta \gets  c / \sqrt i$\\
    ~~~~ $s \sim p_{X,y}$ \Comment{Using Lemma \ref{lem:sampling}}\\
    ~~~~ $y \gets y + \eta \left  (\frac 1 n\mathbbm 1 - T_{X,y}(s) \right)$ \Comment{$T$ computed via Lemma \ref{lemma:tentlocallyexp}. $\frac 1 n\mathbbm 1$ follows from Equation~\eqref{eqn:optformnew}}
\EndFor
\Return $y$
\end{algorithmic}
\caption{ComputeLogConcaveMLE($X_1, \ldots, X_n, \epsilon$)
\label{alg:complete}}
\end{algorithm}

Proving lemma \ref{lemma:tentlocallyexp} and highlighting the connection to exponential families will be the focus of this section. Section \ref{sec:alg} will fill in the remaining details by establishing polynomial time methods for sampling from tent distributions and bounding the number of iterations for stochastic gradient descent to converge. 

A reader familiar with exponential families may note that Equation \eqref{eqn:optformnew} and algorithm \eqref{alg:complete} share a superficial similarity with the exponential family maximum likelihood problem. We develop this connection further in this section, generalizing some tools that were originally limited to exponential families. Note this applies even though the log-concave MLE is a \emph{non-parametric} statistics problem. 

\subsection{The Polyhedral Sufficient Statistic \label{sec:suffstat}}
Consider a regular triangulation $\Delta$ corresponding to tent distribution parametrized by $X$ and $y$. The \emph{polyhedral statistic} is the function $$T_{X,y}(x): S_n \rightarrow [0,1]^n,$$ that expresses $x$ as a convex combination of corners of the cell containing $x$ in $\Delta_y$. That is $x = X T_{X,y}(x)$ where  $\lvert \lvert T_y(x) \rvert \rvert_1 = 1$ and $T_y(x)_i = 0$ if $X_i$ is not a corner of the cell containing $x$. The polyhedral statistic gives an alternative way of writing tent functions and tent densities: 
$$ h_{X,y}(x) = \langle T_y(x), y \rangle ~~~~~~~~~~ p_{X,y}(x) = \exp( \langle T_y(x), y \rangle) \;.$$

If we restrict $y$ such that $\sum\limits_i y_i = 0$ and define $A(y) = \log \int\limits_x p_{X,y}(x) dx$, then we can see that for every consistent neighborhood $N_\Delta$ we have an exponential family of the form 
\begin{align}
     \exp\left ( \langle T_y(x), \theta \rangle - A(y) \right ) \; \textrm{ for }  \theta \in N_\Delta \;. \label{eqn:expfam}
\end{align}
While Equation \eqref{eqn:expfam} shows how subsets of tent distributions are exponential families, it also helps highlight why tent distributions are \emph{not} an exponential family. The sufficient statistic depends on $y$ through the regular subdivision. This means that tent distributions do not admit the same factorized form as exponential families since the sufficient statistic depends on $y$.

Note that we can use any ordering of $X_1,\ldots,X_n$ to define the polyhedral sufficient statistic everywhere including on regular subdivisions that are \emph{not} regular triangulations. Also note that, assuming that no $X_i = X_j, i \neq j$, eliminating the last coordinate using the constraint $\mathbbm 1_n^T \theta = 0$ makes each exponential family minimal. In other words, over regions where the regular subdivision does not change (for example the consistent neighborhoods), tent distributions are minimal exponential families. This means the set of tent distribution can be seen as the finite union of a set of minimal exponential families. We refer to Equation \eqref{eqn:expform} as the exponential form for tent densities:
\begin{equation}
    p_{X,y} (x) = \exp\left ( \langle T_{X,y}(x), y \rangle - A(y) \right ) \mathbbm 1_{S_n}(x). \label{eqn:expform}
\end{equation}

Both the polyhedral statistic and tent density queries can be computed in polynomial time with the packing linear program presented in Equation \eqref{opt:densitylp}. For a point $x$, the value of $y$ yields the log-density and the vector $\alpha$ corresponds to polyhedral statistic. 
\begin{align}
\max y \textrm{ s.t. } (x,y) =\sum\limits_i \alpha_i (X_i,y_i), \sum\limits_i \alpha_i = 1, \alpha_i \ge 0 \label{opt:densitylp}
\end{align}

Note that the above combined with tent distributions being exponential families on consistent neighborhoods gives us that the properties from Lemma \ref{lemma:tentlocallyexp} hold true on consistent neighborhoods. We extend the proof to the full result below. 
\begin{proof}
Convexity  follows by iteratively applying known operations that preserve convexity of a function.
Since a sum of convex functions is convex (see, e.g., page 79 of ~\cite{boyd2}), 
it suffices to show that the function $G(y) =  \ln(\int \exp(h_{X,y}(x)) \dd x)$ is convex.
Since $h_{X,y}(x)$ is a convex function of $y$, by definition, $\exp(h_{X,y}(x))$ 
is log-convex as a function of $y$. Since an integral of log-convex functions 
is log-convex (see, e.g., page 106 of ~\cite{boyd2}), it follows that $\int \exp(h_y(x)) \dd x$ is log-convex. 
Therefore, $G$ is convex. We have therefore established that Equation \eqref{eqn:optformnew} is convex, as desired.

$\mathbb E_{x \sim p_{X,y}}[T_{X,y}(x)] \in \partial_y A(y)$: Note that when $y$ is in the interior of a consistent neighborhood, the polyhedral statistic LP has a unique solution and $\mathbb E_{x \sim p_{X,y}}[T(x)] \in \partial_y A(y)$ (by Fact~\ref{fact:expfam}). When $y$ is on the boundary the solution set to the LP corresponds to the convex hull of solutions corresponding to each adjacent consistent neighborhood. This corresponds to the convex hull of limiting gradients from each neighboring consistent neighborhood and is the set of subgradients.     
\end{proof}

\section{Algorithm and Detailed Analysis \label{sec:alg}}
Recall that we compute the log-concave MLE via a first-order method on the optimization formulation presented in Equation~\eqref{eqn:optformnew}. The complete method is presented in Algorithm~\ref{alg:complete}.
The algorithm is based on the stochastic gradient computation presented in the previous section, a standard application of the stochastic gradient method, and a sampler to be described later in this section.

\subsection{Analysis}
\label{sec:analysis}

We now provide the main technical ingredients used to prove Theorem \ref{thm:main}.
Specifically, we bound the rate of convergence of the stochastic subgradient method, and we provide an efficient procedure for sampling from a log-concave distribution.


\subsubsection{Stochastic Subgradient Method}
Recall that algorithm \ref{alg:complete} is simply applying the stochastic subgradient method to the following convex program with $\mathbbm 1^T y = 0$: $h(y) = \left \langle \frac 1 n \mathbbm 1_n, y \right \rangle - A(y)$.

We will require a slight strengthening of the following standard result, 
see, e.g., Theorem 3.4.11 in~\cite{Duchi16}:
\begin{fact} \label{thm:sgd}
Let $\mathcal{C}$ be a compact convex set of diameter $\mathrm{diam}(\mathcal{C})<\infty$.
Suppose that  
the projections $\pi_{\mathcal{C}}$ are efficiently computable, and there exists $M <\infty$ such that 
for all $y \in \mathcal{C}$ we have that $\|g\|_2 \leq M$ for all stochastic subgradients.
Then, after $K = \Omega \left( M \cdot \mathrm{diam}(\mathcal{C}) \log(1/\tau)/\eps^2 \right)$ iterations 
of the projected stochastic subgradient method (for appropriate step sizes), 
with probability at least $1-\tau$, we have that 
$F\left(\bar{y}^{(K)}\right) - \min_{y \in \mathcal{C}}F(y) \leq \eps \;,$
where $\bar{y}^{(K)} = (1/K) \sum_{i=1}^K y^{(i)}$.
\end{fact}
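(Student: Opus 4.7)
The plan is to follow the classical projected stochastic subgradient analysis, augmented with a martingale concentration argument to turn the in-expectation guarantee into a high-probability one. Write $y^{\ast} \in \argmin_{y \in \mathcal{C}} F(y)$, let $g^{(i)}$ denote the stochastic subgradient drawn at step $i$, and let $\bar{g}^{(i)} = \mathbb{E}[g^{(i)} \mid y^{(1)},\dots,y^{(i)}] \in \partial F(y^{(i)})$. The update rule is $y^{(i+1)} = \pi_{\mathcal{C}}(y^{(i)} - \eta_i g^{(i)})$ for step sizes $\eta_i$ to be selected.

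First I would establish the standard one-step recursion. Non-expansiveness of the Euclidean projection onto the convex set $\mathcal{C}$ together with expanding the square gives
\[
\|y^{(i+1)} - y^{\ast}\|_2^2 \leq \|y^{(i)} - y^{\ast}\|_2^2 - 2\eta_i \langle g^{(i)}, y^{(i)} - y^{\ast}\rangle + \eta_i^2 \|g^{(i)}\|_2^2.
\]
Using $\|g^{(i)}\|_2 \leq M$ and the subgradient inequality $\langle \bar{g}^{(i)}, y^{(i)} - y^{\ast}\rangle \geq F(y^{(i)}) - F(y^{\ast})$, and rearranging, I obtain
\[
F(y^{(i)}) - F(y^{\ast}) \leq \frac{\|y^{(i)} - y^{\ast}\|_2^2 - \|y^{(i+1)} - y^{\ast}\|_2^2}{2\eta_i} + \frac{\eta_i M^2}{2} + \langle \bar{g}^{(i)} - g^{(i)}, y^{(i)} - y^{\ast}\rangle.
\]
Summing from $i=1$ to $K$, telescoping the first term (using $\|y^{(1)} - y^{\ast}\|_2 \leq \mathrm{diam}(\mathcal{C})$), choosing $\eta_i \asymp \mathrm{diam}(\mathcal{C})/(M\sqrt{i})$, and invoking convexity of $F$ to pull the averaged iterate $\bar{y}^{(K)}$ inside, I get a deterministic contribution of order $M \cdot \mathrm{diam}(\mathcal{C})/\sqrt{K}$ plus a zero-mean martingale remainder $\sum_i \langle \bar{g}^{(i)} - g^{(i)}, y^{(i)} - y^{\ast}\rangle$.

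The main step — and the one that produces the $\log(1/\tau)$ factor — is converting this residual martingale into a high-probability bound. The increments $Z_i = \langle \bar{g}^{(i)} - g^{(i)}, y^{(i)} - y^{\ast}\rangle$ form a martingale difference sequence with respect to the natural filtration, and by Cauchy--Schwarz together with the subgradient and diameter bounds we have $|Z_i| \leq 2 M \cdot \mathrm{diam}(\mathcal{C})$ almost surely. Azuma--Hoeffding then yields
\[
\Pr\!\left[\sum_{i=1}^K Z_i \geq t\right] \leq \exp\!\left(-\frac{t^2}{8 K M^2 \mathrm{diam}(\mathcal{C})^2}\right),
\]
so setting $t = O\!\bigl(M \cdot \mathrm{diam}(\mathcal{C}) \sqrt{K \log(1/\tau)}\bigr)$ confines the martingale contribution, after averaging, to $O\!\bigl(M \cdot \mathrm{diam}(\mathcal{C}) \sqrt{\log(1/\tau)/K}\bigr)$ with probability at least $1-\tau$.

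Combining the deterministic telescope with the martingale tail bound and dividing by $K$, I obtain
\[
F(\bar{y}^{(K)}) - F(y^{\ast}) \leq O\!\left(\frac{M \cdot \mathrm{diam}(\mathcal{C})\sqrt{\log(1/\tau)}}{\sqrt{K}}\right),
\]
which is at most $\eps$ once $K = \Omega\!\bigl(M^2 \cdot \mathrm{diam}(\mathcal{C})^2 \log(1/\tau)/\eps^2\bigr)$. The only non-routine ingredient is the Azuma step used to convert a martingale remainder into a deterministic tail bound; the rest of the argument is bookkeeping on the canonical one-step recursion for projected subgradient descent, and efficient implementation is guaranteed by the hypothesis that projections $\pi_{\mathcal{C}}$ are computable. (I note that the textbook scaling of the iteration count is $M^2 \cdot \mathrm{diam}(\mathcal{C})^2$ rather than $M \cdot \mathrm{diam}(\mathcal{C})$, and I would use this corrected form downstream.)
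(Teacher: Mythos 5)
Your proof is correct and is precisely the standard argument (one-step recursion via non-expansiveness of the projection, telescoping with $\eta_i \asymp \mathrm{diam}(\mathcal{C})/(M\sqrt{i})$, and Azuma--Hoeffding on the martingale remainder) that the paper itself does not reproduce but simply cites from Duchi's lecture notes. Your remark about the scaling is also well-taken: the iteration count should read $K = \Omega\bigl(M^2\,\mathrm{diam}(\mathcal{C})^2\log(1/\tau)/\eps^2\bigr)$, and this is in fact the form the paper uses downstream, where it sets $c \propto \mathrm{diam}(\mathcal{C})$, bounds $M\leq 1$, and runs $2c^2/\eps^2$ iterations.
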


We note that Fact~\ref{thm:sgd} assumes that, in each iteration, we can efficiently calculate
an {\em unbiased} stochastic subgradient, i.e., a vector  $g^{(k)}$ such that 
$\E[g^{(k)}] \in \partial_y F(y^{(k)})$.
Unfortunately, this is not the case in our setting,
because we can only {\em approximately} sample from log-concave densities. However, 
it is straightforward to verify that the conclusion of Fact~\ref{thm:sgd} continues to hold
if in each iteration we can compute a random vector $\widetilde{g}^{(k)}$ such that 
$\|\E [\widetilde{g}^{(k)}] - g^{(k)}\|_2 < \delta \eqdef \eps/(2\mathrm{diam}(\mathcal{C}))$, 
for some $g^{(k)} \in \partial_y F(y^{(k)})$. 
This slight generalization is the basic algorithm we use in our setting.

We now return to the problem at hand. We note that since $T$ represents the coefficients of a convex combination $||T(x)|| < 1$ for all $x$, bounding $M$ by 1.

Lemma \ref{lemma:radiusbound} will show that $\mathrm{diam}(\mathcal C) = O(2n^2d \log (2nd))$. This implies that if we let $c = 8 n^2 d \log (2nd)$ and run SGD for $\frac{2c^2}{\epsilon^2}$ iterations, the resulting point will have objective value within $\epsilon$ of the log-concave MLE.\\

\begin{restatable}{lemma}{lemmaradiusbound}\label{lemma:radiusbound}
Let $X_1, \ldots , X_n$ be a set of points in $\mathbb R^d$ and $\hat f$ be the corresponding log-concave MLE. Then, we have that 
$R_\infty \eqdef \frac{\max_{i \in [n]}\hat f(X_i)}{\min_{i \in [n]}\hat f(X_i)}  \leq (2 n d)^{2 n d}$.
Converting to an $\ell_2$ norm yields a bound on the diameter of $\mathcal C$:
$\mathrm{diam}(\mathcal C) \le 2n^2d \log (2nd).$
\end{restatable}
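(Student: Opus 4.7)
My plan is to couple the first-order optimality condition for the MLE tent density with a cell-by-cell estimate inside the regular subdivision $\Delta_{X,y^{\ast}}$. From Lemma~\ref{lemma:tentlocallyexp} and the normalization $\ind^T y = 0$, the MLE $y^{\ast}$ satisfies $\E_{x \sim p_{X,y^{\ast}}}[T_{X,y^{\ast}}(x)_i] = 1/n$ for every $i$, so each tent pole carries an equal ``polyhedral mass'' $1/n$. I will also use the elementary bound $A(y^{\ast}) \le \log \vol(S_n)$, obtained by comparing the objective to its value at $y=0$ (the uniform density on $S_n$, which is log-concave).

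The core estimate focuses on the pole $X_{i_0}$ with $y_{i_0} = y_{\min}$. Let $B \eqdef y_{\max} - y_{\min}$ and $U_{i_0}$ be the union of cells of $\Delta_{X,y^{\ast}}$ containing $X_{i_0}$ as a vertex. On each such cell the tent function is affine in the barycentric coordinates, and because the height at $X_{i_0}$ equals $y_{\min}$ while all other vertex heights are at most $y_{\max}$, one has the pointwise bound $h_{X,y^{\ast}}(x) \le y_{\max} - \tau B$ with $\tau \eqdef T_{X,y^{\ast}}(x)_{i_0} \in [0,1]$. Using the standard barycentric change of variables inside each $d$-simplex cell (which produces a Jacobian proportional to $(1-\tau)^{d-1}$) and summing over the cells of $U_{i_0}$, the first-order condition at $i_0$ reads
\[
\tfrac{1}{n} \;=\; \int_{U_{i_0}} p_{X,y^{\ast}}(x)\, T_{X,y^{\ast}}(x)_{i_0}\, dx \;\le\; d \cdot \vol(S_n) \cdot e^{y_{\max} - A(y^{\ast})} \int_0^1 \tau e^{-\tau B}\, d\tau \;\le\; \frac{d\,\vol(S_n)\,e^{y_{\max} - A(y^{\ast})}}{B^2}.
\]
Rearranging gives $B^2 \le n\,d\,\vol(S_n)\,e^{y_{\max} - A(y^{\ast})}$.

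The main obstacle is controlling the remaining product $\vol(S_n)\,e^{y_{\max} - A(y^{\ast})} = \vol(S_n)\cdot\max\hat f$, which is unbounded for arbitrary log-concave densities on $S_n$ and so must use MLE-specific input. The idea is to compare $\ell(\hat f)$ to the log-likelihood of a Gaussian centered at $X_{i^{\ast}}$ (where $y_{i^{\ast}} = y_{\max}$) with variance optimally tuned to the sample spread $R_{i^{\ast}} = \sum_j \|X_j - X_{i^{\ast}}\|^2$, obtaining $-nA(y^{\ast}) = \ell(\hat f) \ge -\tfrac{nd}{2}\log\bigl(\tfrac{2\pi e R_{i^{\ast}}}{nd}\bigr)$. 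Combining this with the symmetric first-order condition at $i^{\ast}$ (which, via the barycentric formula $\int_{\mathrm{cell}} T_{i^{\ast}}\,dx = V_{\mathrm{cell}}/(d+1)$, gives $\vol(U_{i^{\ast}})\cdot e^{y_{\max} - A(y^{\ast})} \ge (d+1)/n$), and normalizing to $\vol(S_n)=1$ by scale invariance of $R_\infty$, forces $\vol(S_n)\cdot e^{y_{\max} - A(y^{\ast})} \le (2nd)^{O(nd)}$. Plugging back yields $B \le 2nd\log(2nd)$ and hence $R_\infty = e^B \le (2nd)^{2nd}$.

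The $\ell_2$-diameter bound is then routine: under $\ind^T y = 0$, the inequality $y_{\max} - y_{\min} \le 2nd\log(2nd)$ gives $\|y\|_\infty \le 2nd\log(2nd)$, and hence $\|y\|_2 \le \sqrt{n}\,\|y\|_\infty \le 2n^{3/2} d \log(2nd) \le 2n^2 d \log(2nd)$, so the feasible set $\mathcal C$ can be restricted to a ball of diameter at most $2n^2 d \log(2nd)$ that contains the MLE.
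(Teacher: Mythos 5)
Your proposal has a genuine gap at exactly the point you flag as ``the main obstacle'': bounding $\vol(S_n)\cdot \max\mle = \vol(S_n)\, e^{y_{\max}-A(y^{\ast})}$ from above. The two ingredients you propose both point the wrong way. The Gaussian likelihood comparison lower-bounds $\ell(\mle) = -nA(y^{\ast})$, i.e., it upper-bounds $A(y^{\ast})$, which only gives a \emph{lower} bound on $e^{y_{\max}-A(y^{\ast})}$. Likewise the first-order condition at $i^{\ast}$, via $\int_{\mathrm{cell}} T_{i^{\ast}} = V_{\mathrm{cell}}/(d+1)$ and $p_{X,y^\ast}\le M$, yields $\vol(U_{i^{\ast}})\cdot M \ge (d+1)/n$ --- again a lower bound. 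Neither forces $\vol(S_n)\cdot M \le (2nd)^{O(nd)}$, and the only a priori upper bound available from $\int \mle = 1$ and $\mle \ge M/R_\infty$ on $S_n$ is $M\vol(S_n)\le R_\infty = e^B$; feeding that into your estimate $B^2 \le nd\cdot M\vol(S_n)$ gives $B^2 \le nd\, e^B$, which is vacuous. (Your cell-by-cell barycentric computation leading to $B^2 \le nd\cdot \vol(S_n)\cdot e^{y_{\max}-A(y^{\ast})}$ is itself correct, but it is a strictly weaker lower bound on $M\vol(S_n)$ than what is needed, and it cannot close the loop.) There is also a secondary issue: $R_{i^{\ast}}=\sum_j\|X_j-X_{i^{\ast}}\|^2$ is not controlled by $n,d$ alone even after normalizing $\vol(S_n)=1$, so the Gaussian comparison would additionally require placing the points in (say) isotropic position and invoking affine invariance of $R_\infty$.

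The missing idea is a quantitative upper bound on $M\vol(S_n)$ in terms of $R_\infty$ itself, which is what the paper supplies. The paper's argument is entirely elementary and avoids the first-order conditions: (i) comparing $\ell(\mle)$ to the likelihood of the \emph{uniform} density on $S_n$ (not a Gaussian) gives $\ln(M/R_\infty) + (n-1)\ln M \ge n\ln(1/\vol(S_n))$, i.e., $R_\infty^{1/n} \le M\vol(S_n)$; (ii) the level-set volume bound for log-concave densities (Fact~\ref{fact:vol}: $\vol(L_f(M_f e^{-w}))\le w^d/M_f$), applied with $w=\ln R_\infty$ together with $S_n \subseteq L_{\mle}(M/R_\infty)$, gives $M\vol(S_n) \le (\ln R_\infty)^d$. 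Chaining these yields the self-referential inequality $R_\infty \le (\ln R_\infty)^{nd}$, which resolves to $R_\infty \le (2nd)^{2nd}$. If you want to keep your structure, the piece you must import is (ii): some version of the statement that a log-concave density cannot simultaneously have a large maximum and stay within a factor $R_\infty$ of that maximum on a set of volume $\vol(S_n)$ unless $\ln R_\infty$ is large. Your closing $\ell_2$ conversion is fine once $B$ is bounded.
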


Let us briefly sketch the proof of Lemma \ref{lemma:radiusbound}.
The main idea is to show that if $R_\infty$ were too high, then $\mle$ would have a lower likelihood than the uniform distribution on the convex hull of the samples $S_n$.
More specifically, if the maximum value $M$ of the density $\mle$ is large, then the volume of the set  $\{x\in \mathbb{R}^d: \mle(x) \geq M/R\}$ is small.
For a fixed $R$, this set contains $S_n$ and thus  $R_\infty$ must be large compared to $M\vol(S_n)$. Since $\mle$ has likelihood at least as high as the uniform distribution over $S_n$, $R$ must be small compared to $M\vol(S_n)$. Combining these two observations yields a bound on $R$.

We now proceed with the complete proof. 

\begin{proof}[Proof of Lemma~\ref{lemma:radiusbound}]
Let $V = \vol(S_n)$ be the volume of the convex hull of the sample points
and $M =  \max_{x} \mle(x)$ be the maximum pdf value of the MLE. 
By basic properties of the log-concave MLE (see, e.g., Theorem~2 of~\cite{CSS10}), we have that
$\mle(x) >0$ for all $x \in S_n$ and $\mle(x)=0$ for all $x \not \in S_n$.
Moreover, by the definition of a tent function, it follows that $\mle$ attains its global maximum value and its global non-zero positive value
in one of the points $X_i$. 

We can assume without loss of generality that $\mle$ is not the uniform distribution 
on $S_n$, since otherwise $R_\infty = 1$ and the lemma follows.
Under this assumption, we have that $R_\infty > 1$ or $\ln R_\infty > 0$, which implies that $M > 1/V$. 
The following fact bounds the volume of upper level sets 
of any log-concave density:
\begin{fact}[see, e.g., Lemma~8 in ~\cite{CDSS18}] \label{fact:vol}
Let $f \in \mathcal{F}_d$ with maximum value $M_f$. Then for all $w >0$, we have
$\vol(L_{f}(M_{f} e^{-w})) \leq \nnew{{w}^d / M_{f}}$.
\end{fact}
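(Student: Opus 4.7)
The plan is to follow the sketch already outlined in the paper and then carefully solve the resulting transcendental inequality. Let $V = \vol(S_n)$ and $M = \max_x \mle(x)$; by the structural properties of $\mle$ cited just above (it is supported on $S_n$ and is a tent function with maximum attained at some $X_i$), we have $M \le \mle(X_{i^\ast})$ for some sample. I will first dispose of the degenerate case: if $\mle$ is uniform on $S_n$, then $R_\infty = 1$ and we are done, so we may assume $R_\infty > 1$. Since $\mle$ has log-likelihood at least that of the uniform distribution on $S_n$ (which has density $1/V$), we also get $M > 1/V$.

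The core argument has two legs which combine into a self-bounding inequality on $\ln R_\infty$. For the first leg, let $m = \min_{i} \mle(X_i) = M/R_\infty$. By log-concavity the super-level set $L = \{x : \mle(x) \ge m\}$ is convex; it contains every sample $X_i$ and hence contains their convex hull $S_n$. So $V \le \vol(L)$, and applying Fact~\ref{fact:vol} with $w = \ln R_\infty$ gives
\[
V \;\le\; \vol\!\left(L_{\mle}(M e^{-\ln R_\infty})\right) \;\le\; \frac{(\ln R_\infty)^d}{M},
\]
i.e., $MV \le (\ln R_\infty)^d$. For the second leg, I compare log-likelihoods. At least one sample attains $m$ and the rest are bounded by $M$, so
\[
\sum_i \ln \mle(X_i) \;\le\; (n-1)\ln M + \ln m \;=\; n \ln M - \ln R_\infty.
\]
The MLE beats the uniform-on-$S_n$ density at the samples, so $\sum_i \ln \mle(X_i) \ge -n \ln V$. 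Chaining these gives $\ln R_\infty \le n \ln(MV) \le nd \,\ln\ln R_\infty$.

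The main technical nuisance is cleanly solving $z \le nd \ln z$ for $z = \ln R_\infty$. Setting $z_0 = 2nd \ln(2nd)$ and plugging in, one checks $nd \ln z_0 = nd(\ln(2nd) + \ln\ln(2nd)) < 2nd \ln(2nd) = z_0$ (provided $2nd \ge 3$, so $\ln(2nd) > \ln\ln(2nd)$; the tiny cases can be handled trivially). Since the function $z - nd\ln z$ is eventually increasing and negative at the solution, this forces $\ln R_\infty \le 2nd \ln(2nd)$, i.e., $R_\infty \le (2nd)^{2nd}$, which is the first claim.

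For the diameter bound, recall the convex program normalizes $\mathbbm{1}^\top y = 0$ and identifies coordinates with log-densities at the samples. Then $\max_i y_i - \min_i y_i = \ln R_\infty \le 2nd \ln(2nd)$, and since $y$ has mean zero this yields $|y_i| \le 2nd \ln(2nd)$ for every $i$. Hence any feasible $y$ in the search set $\mathcal{C}$ satisfies $\|y\|_2 \le \sqrt{n}\cdot 2nd\ln(2nd)$, and the diameter is at most $2\|y\|_2 \le 4 n^{3/2} d \ln(2nd) \le 2 n^2 d \log(2nd)$, as claimed. The one thing to double-check in writing up is that $\mathcal{C}$ can indeed be taken to be this $\ell_\infty$-box (equivalently, that restricting SGD to this set does not exclude the optimum), which follows because the optimum is the MLE's parameter vector, which we just showed lies in the box.
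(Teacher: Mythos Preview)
You have proved the wrong statement. The statement in question is Fact~\ref{fact:vol}: for any log-concave density $f$ with maximum $M_f$ and any $w>0$, the superlevel set $L_f(M_f e^{-w})$ has volume at most $w^d/M_f$. This is a general fact about arbitrary log-concave densities; it has nothing to do with samples, the MLE, or $R_\infty$. What you have written is instead a proof of Lemma~\ref{lemma:radiusbound} (the bound on $R_\infty$ and on $\mathrm{diam}(\mathcal{C})$), and in that argument you \emph{invoke} Fact~\ref{fact:vol} as an input (``applying Fact~\ref{fact:vol} with $w=\ln R_\infty$'').

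So as a proof of Fact~\ref{fact:vol} this is vacuous: the claimed statement is used, not established. Incidentally, the paper does not supply its own proof of Fact~\ref{fact:vol} either; it simply cites Lemma~8 of \cite{CDSS18}. A self-contained argument goes roughly as follows: let $K=L_f(M_f e^{-w})$ and pick $x_0$ with $f(x_0)=M_f$. By log-concavity, on the scaled set $x_0+\tfrac{j}{w}(K-x_0)$ one has $f\ge M_f e^{-j}$, and these sets are nested for $j=0,1,\dots,\lfloor w\rfloor$; summing (or integrating) the resulting lower bounds on the mass of $f$ against the constraint $\int f=1$ forces $\vol(K)\le w^d/M_f$. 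Your write-up, while a correct proof of Lemma~\ref{lemma:radiusbound} that matches the paper's argument closely, does not address this at all.
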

By Fact~\ref{fact:vol} applied to the MLE $\mle$, for $w = \ln R_\infty$, we get that
$\vol(L_{\mle}(M/R_\infty)) \leq (\ln R_\infty)^d/ M$. Since the pdf value of $\mle$ at 
any point in the convex hull $S_n$ is at least that of the smallest sample point 
$X_i$, i.e., $M/R_\infty$, it follows that $S_n$ is contained in $L_{\mle}(M/R_\infty)$.
Therefore, 
\begin{equation} \label{eqn:volume-ub}
V \leq (\ln R_\infty)^d/M \;. 
\end{equation} 
On the other hand, the log-likelihood of $\mle$
is at least the log-likelihood of the uniform distribution $U_{S_n}$ on $S_n$. 
Since at least one sample point $X_i$ has pdf value $\mle(X_i) = M/R_\infty$ 
and the other $n-1$ sample points have pdf value $\mle(X_i) \leq M$, 
we have that 
$$\ln(M/R_\infty) + (n-1) \ln M \geq \ell(\mle) \geq \ell(U_{S_n}) = n \ln (1/V) \;,$$
or $n \ln M - \ln R_\infty \geq - n \ln V$, 
and therefore $\ln (M V) \geq (\ln R_\infty)/n$. 
This gives that
\begin{equation} \label{eqn:volume-lb}
R_\infty^{1/n} \leq M V  \;.
\end{equation}
Combining \eqref{eqn:volume-ub} and \eqref{eqn:volume-lb} gives
\begin{equation} \label{eqn:R}
R_\infty \leq (\ln R_\infty)^{n d}  \;.
\end{equation}
Since $\ln x < x$, $x \in \R$,  setting $x= R_\infty^{1 \over{2 n d}}$ gives that
$\ln R_\infty < 2 n d \cdot R_\infty^{1 \over{2 n d}}$ or 
\begin{equation} \label{eqn:lnR}
(\ln R_\infty)^{n d} < (2 n d)^{n d} \cdot R_\infty^{1/2}\;.
\end{equation}
By \eqref{eqn:R} and \eqref{eqn:lnR} we deduce that
$R_\infty \leq (2 n d)^{n d} \cdot R_\infty^{1/2}$ or 
$$R_\infty \leq (2 n d)^{2 n d} \;.$$
This completes the proof of Lemma~\ref{lemma:radiusbound}.
\end{proof}


\subsubsection{Efficient Sampling and Log-Partition Function Evaluation}
\label{sec:sampling}


In this section, we establish the following result, 
which gives an efficient algorithm for sampling from the log-concave distribution computed by our algorithm.


\begin{lemma}[Efficient Sampling]\label{lem:sampling}
There exist algorithms ${\cal A}_1$ and ${\cal A}_2$ satisfying the following:
Let $\delta, \tau>0$, 
let $X=X_1,\ldots,X_n\in \mathbb{R}^d$,
let $y\in \mathbb{R}^n$ be a parameter of a tent-density in exponential form. 
Then the following conditions hold:
\begin{itemize}
\item[(1)]
On input $X$, $y$, $\delta$, and $\tau$, algorithm ${\cal A}_1$ outputs a random vector $Z\in \mathbb{R}^d$, 
distributed according to some probability distribution with density $\widetilde{\phi}$, such that 
$\|\widetilde{\phi} - p_{X,y}\|_1 = O(\delta)$,
in time
$\poly(n, d, \|y\|_\infty, 1/\delta, \log(1/\tau))$,
with probability at least $1-\tau$.

\item[(2)]
On input $X$, $y$, $\delta$, and $\tau$, algorithm ${\cal A}_2$ outputs some $\gamma'>0$, such that 
$\gamma' / (1+O(\delta)) \leq \int \exp(h_{X,y}(x)) \dd x \leq \gamma' \cdot (1+O(\delta))$,
in time
$\poly(n, d, \|y\|_\infty, 1/\delta, \log(1/\tau))$,
with probability at least $1-\tau$.
\end{itemize}
\end{lemma}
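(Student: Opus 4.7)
The plan is to reduce both approximate sampling from $p_{X,y}$ and approximation of $Z(y) := \int \exp(h_{X,y}(x))\,\dd x$ to standard convex-body problems (uniform sampling and volume approximation), which admit randomized polynomial-time algorithms via the KLS framework and hit-and-run of~\cite{kannan1997random, lovasz2006simulated, lovasz2006hit}. The starting observation is that although $h_{X,y}$ has no explicit halfspace description, each super-level set $L_i := \{x \in S_n : h_{X,y}(x) \geq M - i\}$, with $M := \max_j y_j$, is convex (since $h_{X,y}$ is concave) and admits an efficient membership oracle: to test $x \in L_i$ one simply solves the LP of Equation~\eqref{opt:densitylp} in time $\poly(n,d)$ to evaluate $h_{X,y}(x)$ and compares the result to $M - i$.

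First I would fix an integer $m = \Theta(d \log(nd\|y\|_\infty/\delta))$ large enough that, by the log-concave level-set bound in Fact~\ref{fact:vol} applied to $\exp(h_{X,y})$, the tail integral $\int_{S_n \setminus L_m} \exp(h_{X,y}(x))\,\dd x$ contributes at most a $\delta$ fraction of $Z(y)$. For algorithm ${\cal A}_2$, the layer-cake identity discretized along the geometric grid $t_i = e^{-i}$ gives the estimator
\[
\widehat{Z} \;:=\; e^M \sum_{i=0}^{m-1} (e^{-i} - e^{-i-1})\,\widehat{V}_i \;+\; e^{M-m}\,\widehat{V}_m,
\]
where $\widehat{V}_i$ is any multiplicative $(1 \pm \delta/m)$-approximation of $\vol(L_i)$; summing up errors then yields $\widehat{Z} = (1 \pm O(\delta)) Z(y)$. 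For algorithm ${\cal A}_1$, I would (i) draw a level index $I$ from the discrete distribution with $\Pr[I=i] \propto (e^{-i} - e^{-i-1})\widehat{V}_i$, (ii) draw a near-uniform sample $Z$ from $L_I$ using the same convex-body sampler, and (iii) accept with probability $\exp(h_{X,y}(Z) - (M - I))$, which lies in $[e^{-1}, 1]$; thus rejection adds only a constant overhead and a triangle-inequality argument yields $\|\widetilde{\phi} - p_{X,y}\|_1 = O(\delta)$.

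The main technical obstacle -- and the reason we cannot simply invoke hit-and-run on $p_{X,y}$ as a black box -- is that the KLS volume and sampling routines on $L_i$ each require an inscribed ball of polynomial radius together with a warm start. I would handle both requirements via a bootstrapping schedule along a sufficiently fine subdivision of the level parameter: inserting $O(d)$ intermediate half-levels between consecutive $L_i$ whenever needed to keep the volume ratio of consecutive bodies in the chain bounded by a constant, and starting from a small ball around the argmax tent pole $X_{j^\ast}$ whose radius is controlled using the diameter and MLE-range bounds of Lemma~\ref{lemma:radiusbound}. Approximately uniform samples on one link of the chain then supply the warm start needed to run hit-and-run on the next. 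Combining the per-level running time, a chain length that is $\poly(d, \|y\|_\infty, \log(1/\delta))$, a union bound over failure events, and the bounded-diameter bounding box furnished by Lemma~\ref{lemma:radiusbound} yields the claimed $\poly(n, d, \|y\|_\infty, 1/\delta, \log(1/\tau))$ runtime for both ${\cal A}_1$ and ${\cal A}_2$.
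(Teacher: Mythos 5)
Your overall strategy is the same as the paper's: decompose $\exp(h_{X,y})$ into geometrically spaced super-level sets, use the LP \eqref{opt:densitylp} as a membership oracle so that the convex-body volume and uniform-sampling routines of \cite{kannan1997random} apply, and then run rejection sampling against the resulting piecewise-constant envelope. However, two of your concrete steps fail as written. First, the rejection step: having drawn $I$ with $\Pr[I=i]\propto (e^{-i}-e^{-i-1})\widehat V_i$ and $Z\sim u_I$, your acceptance probability $\exp(h_{X,y}(Z)-(M-I))$ is at least $1$ (since $Z\in L_I$ forces $h_{X,y}(Z)\geq M-I$), so it is not a probability; and, more importantly, it depends on the \emph{sampled} index $I$ rather than on $Z$ alone. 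Working out the accepted density, the contribution of level $i$ to the density at $x$ is $(e^{-i}-e^{-i-1})\cdot V_i^{-1}\cdot V_i\cdot e^{h(x)-M+i}=(1-e^{-1})e^{h(x)-M}$, which is \emph{constant in} $i$; summing over the $m-i_0(x)+1$ levels containing $x$ gives an accepted density proportional to $(m-i_0(x)+1)\,e^{h(x)}$, not to $e^{h(x)}$. The fix is exactly the paper's envelope $G_{X,y}(x)=M\cdot 2^{-\lfloor \log_2(M/\exp(h_{X,y}(x)))\rfloor}$: the mixture proposal density is then proportional to $G_{X,y}$, one accepts with probability $\exp(h_{X,y}(Z))/G_{X,y}(Z)\in[1/2,1]$ (a function of $Z$ only), and the telescoping of the level indicators makes the accepted density exactly proportional to $\exp(h_{X,y})$.

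Second, your ${\cal A}_2$ estimator is only a constant-factor approximation, not a $(1\pm O(\delta))$ one. With grid ratio $e$, the layer-cake bounds give $\sum_i(e^{M-i}-e^{M-i-1})\vol(L_i)\leq \int \exp(h_{X,y})\leq e\sum_i(e^{M-i}-e^{M-i-1})\vol(L_i)$, so the discretization error alone is a factor of $e$; making the $\widehat V_i$ accurate to $(1\pm\delta/m)$ does nothing to remove it. You would either have to refine the grid to ratio $1+\delta$ (inflating $m$ to $O(\|y\|_\infty/\delta)$, which is still polynomial), or do what the paper does: use the exact identity $\int \exp(h_{X,y})=M\,c\,\alpha$, where $c=\sum_i 2^{-i}\vol(L_i)+2^{-m}\vol(L_m)$ is the mass of the envelope and $\alpha\geq 1/2$ is the acceptance probability of the rejection sampler, and estimate $\alpha$ by Monte Carlo to relative accuracy $1+O(\delta)$. (Your attention to warm starts and rounding for the KLS subroutines is reasonable and, if anything, more careful than the paper, which invokes those theorems as black boxes with a membership oracle; also note the paper sidesteps your tail-truncation issue entirely by taking $m=\lceil 1+2\|y\|_\infty\rceil$, for which $L_m$ is all of $S_n$.)
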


The algorithm used in the proof of Lemma \ref{lem:sampling} is concerned mainly with part (1) in its statement. The pseudocode of this sampling procedure is given in Algorithm \ref{alg:sampling}. 

Using the notation from Algorithm \ref{alg:sampling}, part (2) is easier to describe and we thus omit the pseudocode. We note that the following exposition of Algorithm \ref{alg:sampling} assumes that the input vector $y$ is bounded. In the execution of Algorithm \ref{alg:complete}, $\| y\|_\infty$ is bounded linearly by the number of SGD iterates. Thus, the dependence of the sampling runtime on $\|y\|_\infty$ increases the overall runtime by at most a  polynomial.

\begin{algorithm}
  \caption{Algorithm to sample from $p_{X,y}$}
  \label{alg:sampling}
  \label{alg:pc-sdp}
\begin{algorithmic}
\Procedure{Sample}{$X_1,\ldots, X_n,y$}\\
\textbf{Input:} Sequence of points $X=\{X_i\}_{i=1}^n$ in $\R^d$, vector $y\in \mathbb{R}^n$, parameter $0< \delta < 1$.
\\
\textbf{Output:} A random vector $Z \in \R^d$ sampled from a probability distribution with density function $\widetilde{\phi}$, such that $\|\widetilde{\phi}-p_{X,y}\|_1 \leq \delta$.
\\
\textbf{Step 1.} Let $m = \lceil 1+2\|y\|_\infty\rceil$. Let $M=\max_{x\in \mathbb{R}^d} \exp(h_{X,y}(x))$.
For any $i\in [m]$, let $L_i=\{x\in \R^d: \exp(h_{X,y}(x))\geq M \cdot 2^{-i}\}$.
 For each $i\in [m]$ compute an estimate $\widetilde{\vol}(L_i)$ of $\vol(L_i)$ such that 
  \[
  \vol(L_i)/(1+\delta)\leq \widetilde{\vol}(L_i) \leq \vol(L_i)(1+\delta).
  \]\\
\textbf{Step 2.} For $i\in [m]$, let $u_i$ be the uniform probability distribution on $L_i$, and let $\widetilde{u}_i$ be an efficiently samplable probability distribution such that 
  \[
  \|\widetilde{u}_i - u_i\|_1 \leq \delta.
  \]\\
\textbf{Step 3.} Let $\widetilde{c} = \sum_{i=1}^m 2^{-i} \widetilde{\vol}(L_i) + 2^{-m} \widetilde{\vol}(L_m)$.\\
\textbf{Step 4.} Let $\widehat{D}$ be the probability distribution on $[m]$ with 
\[
\Pr_{I\sim \widetilde{D}} [I = i] = \left\{\begin{array}{ll}
\widetilde{\vol}(L_i) \cdot 2^{-i} / \widetilde{c} & \text{ if $i\in \{1,\ldots,m-1\}$}\\
2 \cdot \widetilde{\vol}(L_m) \cdot 2^{-m} / \widetilde{c} & \text{ if $i=m$}
\end{array}\right.
\]\\
\textbf{Step 5.} Sample $I\sim \widetilde{D}$.\\
\textbf{Step 6.} Sample $Z \sim \widetilde{u}_I$.\\
\textbf{Step 7.} For any $x\in \R^d$ let
\[
G_{X,y}(x) = M \cdot 2^{-\lfloor \log_2(M/\exp(h_{X,y}(x))) \rfloor}
\]\\
\textbf{Step 8.} With probability $1-\exp(h_{X,y}(Z))/G_{X,y}(Z)$ go to Step 5.\\
\Return{$Z$.}
\EndProcedure
\end{algorithmic}
\medskip
\end{algorithm}

We now present the proof of Lemma \ref{lem:sampling}.
The pseudocode of the sampling procedure is given in  Algorithm \ref{alg:sampling}.
As stated in Section~\ref{sec:sampling}, Algorithm \ref{alg:sampling} uses subroutines for approximating the volume of a convex body given by a membership oracle, and a procedure for sampling from the uniform distribution supported on such a body.
For these procedures we use the algorithms by 
\cite{kannan1997random},
which are summarized in Theorems \ref{thm:KLS_volume} and \ref{thm:KLS_sampling} respectively.

\begin{theorem}[\cite{kannan1997random}]
\label{thm:KLS_volume}
The volume of a convex body $K$ in $\R^d$, given by a membership oracle, can be approximated to within a relative error of $\delta$ with probability $1-\tau$ using
\[
d^5 \cdot \poly(\log d, 1/\delta, \log(1/\tau)) 
\]
oracle calls.
\end{theorem}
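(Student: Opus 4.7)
The plan is to follow the classical multiphase Monte Carlo (MPMC) strategy, combined with a rapidly mixing random walk whose mixing time is controlled via an isoperimetric inequality. The overall target error $\delta$ will be split among several stages (rounding, walk mixing, ratio estimation).

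First, I would apply a preprocessing step that brings $K$ into near-isotropic position. Using $O(d)$ coarse sampling rounds one can compute a linear transformation $A$ such that $A^{-1}K$ is sandwiched between balls of comparable radii; this ensures geometric regularity and is standard in the volume-computation literature. All subsequent work is carried out in the rounded body, and the $\log \det A$ factor is accumulated at the end.

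Next I would set up the telescoping product. Fix a sequence of convex bodies $K_0 \subset K_1 \subset \cdots \subset K_m = K$ obtained by intersecting $K$ with concentric balls $B(0,r_i)$ of geometrically growing radii, chosen so that consecutive ratios $\rho_i = \vol(K_i)/\vol(K_{i-1})$ lie in $[1, e]$ (this is Lovász--Simonovits's ``annealing'' schedule, with $m = O(d \log d)$ phases). The innermost $K_0$ is a ball, whose volume is known exactly. Then
\begin{equation*}
\vol(K) = \vol(K_0) \cdot \prod_{i=1}^{m} \rho_i,
\end{equation*}
so it suffices to estimate each ratio within relative accuracy $O(\delta/m)$ by drawing $N = O(m/\delta^2)$ near-uniform samples from $K_i$ (via rejection: count the fraction falling in $K_{i-1}$). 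A Chernoff bound plus union bound over phases gives the desired failure probability $\tau$ after inflating $N$ by a $\log(m/\tau)$ factor.

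The hard part, and the whole content of the theorem, is producing those near-uniform samples efficiently. I would use the ball walk (or hit-and-run) inside each $K_i$, whose per-step cost is one membership-oracle call plus $\poly(d)$ arithmetic. To bound its mixing time I would prove a conductance lower bound via the KLS-style isoperimetric inequality for log-concave measures (here the uniform measure on an isotropic convex body), which yields conductance $\Omega(1/d)$ after rounding. Combined with a local-step smoothness bound, this gives mixing in $\tilde{O}(d^3)$ steps from a warm start. The warm start for $K_i$ is obtained inductively from samples in $K_{i-1}$, which are already near-uniform in $K_i$ (since $\rho_i \leq e$), so warmness propagates along the schedule.

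Multiplying the costs: $m = \tilde O(d)$ phases, $N = \tilde O(d/\delta^2)$ samples per phase, $\tilde O(d^3)$ walk steps per sample, and one oracle call per step gives $d^5 \cdot \poly(\log d, 1/\delta, \log(1/\tau))$ oracle calls, matching the statement. The principal obstacle is the isoperimetric/conductance estimate underlying the mixing bound; everything else is bookkeeping of error propagation through the telescoping product and a union bound over phases.
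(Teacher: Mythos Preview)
The paper does not prove this theorem at all; it is quoted verbatim from \cite{kannan1997random} and used as a black box in the sampling subroutine (Lemma~\ref{lem:sampling}). There is therefore no ``paper's own proof'' to compare against. Your outline is a faithful high-level sketch of the Kannan--Lov\'asz--Simonovits argument (rounding, telescoping over a geometric schedule of bodies, ball-walk mixing via an isoperimetric/conductance bound, warm-start propagation), and the cost accounting you give matches the stated bound. For the purposes of this paper no such argument is needed: the theorem is invoked, not established.
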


\begin{theorem}[\cite{kannan1997random}]
\label{thm:KLS_sampling}
Given a convex body $K\subset \R^d$, with oracle access, and some $\delta>0$, we can generate a random point $u \in K$ that is distributed according to a distribution that is at most $\delta$ away from uniform in total variation distance, using 
\[
d^5 \cdot \poly(\log d, 1/\delta)
\]
oracle calls.
\end{theorem}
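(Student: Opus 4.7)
The plan is to prove the result by constructing and analyzing a Markov chain on $K$ whose stationary distribution is uniform, so that after polynomially many steps the chain is within $\delta$ total variation distance of uniform. The standard choice is the ball walk: from a current point $x \in K$, pick $y$ uniformly from the Euclidean ball $B(x, r)$ of some radius $r$, and move to $y$ if the membership oracle confirms $y \in K$, otherwise remain at $x$. This is a symmetric, reversible chain with uniform stationary distribution, and each step costs one oracle call. Before running the chain, I would invoke a rounding preprocessing step that applies an affine transformation putting $K$ into nearly-isotropic position, so that $K$ is sandwiched between concentric balls of radii $r_1, r_2$ with $r_2 / r_1 = \poly(d)$. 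Such rounding can itself be performed with the membership oracle in $\poly(d)$ many calls (e.g.\ via a sequence of volume and inertia estimates), so it does not dominate the overall cost.

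Next, I would bound the mixing time of the ball walk on a well-rounded convex body through a conductance argument. The key geometric input is the Lovász–Simonovits isoperimetric inequality for log-concave measures, which lower-bounds the measure of the ``boundary region'' between two subsets of $K$ in terms of their measures and diameters. Plugging this into the standard relation between isoperimetry and conductance for local Markov chains, and choosing the step radius $r = \Theta(1/\sqrt{d})$ to balance the acceptance probability against the local spread, yields a conductance of at least $\Omega(1/\poly(d))$ on any well-rounded body. The Jerrum–Sinclair inequality then converts this into a mixing time of $\poly(d) \cdot \log(M/\delta)$, where $M$ is a bound on the density of the starting distribution with respect to the uniform stationary distribution.

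The issue this creates is that the bound depends on $M$ (a ``warm start'' is required), whereas the theorem is stated with only membership-oracle access and no warm start. To handle this, I would use an annealing scheme: take a sequence of intermediate convex bodies $K_0 \subset K_1 \subset \cdots \subset K_\ell = K$, where $K_0$ is a small ball around a known interior point and each $K_{i+1}$ is obtained from $K_i$ by a small geometric enlargement, so that the uniform distribution on $K_i$ is a warm start for the uniform distribution on $K_{i+1}$ (with $M = O(1)$). Sampling uniformly from $K_0$ is trivial, and at each stage we run the ball walk on $K_{i+1}$ starting from a sample on $K_i$, which takes $\poly(d)$ oracle calls for mixing. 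The number of stages $\ell$ is $O(d \log d)$, and careful accounting of the per-stage and per-step costs yields the stated $d^5 \cdot \poly(\log d, 1/\delta)$ total bound.

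The main obstacle is the isoperimetric/conductance step: proving a dimension-independent (or only polynomially-dependent) lower bound on conductance for the ball walk on an arbitrary well-rounded convex body requires the full strength of the Lovász–Simonovits localization technique, and handling boundary rejections (steps that try to leave $K$) requires either the ``speedy walk'' coupling device or a delicate separate argument, since naive bounds on local conductance degrade near the boundary. A secondary technical difficulty is making the rounding preprocessing work with only a membership oracle (rather than, say, a separation oracle), which is classically handled by interleaving volume estimation with affine transformations and is itself a key contribution of the KLS framework.
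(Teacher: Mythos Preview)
The paper does not prove this theorem at all: it is quoted verbatim as a result of Kannan, Lov\'asz, and Simonovits~\cite{kannan1997random} and used as a black box in the proof of Lemma~\ref{lem:sampling}. There is therefore no ``paper's own proof'' to compare your proposal against. Your sketch is a reasonable high-level outline of the KLS argument (ball walk, rounding to near-isotropic position, Lov\'asz--Simonovits isoperimetry to bound conductance, annealing through a chain of bodies to manufacture a warm start), but none of this appears in the present paper, and for the purposes of this paper no proof is expected---the theorem is simply cited.
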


For all $X=X_1,\ldots,X_n\in \mathbb R^d$, $y\in \mathbb{R}^n$, and $x\in \mathbb{R}^d$, we use the notation $H_{X,y}(x) = \exp(h_{X,y}(x))$.

In order to use the algorithms in Theorems \ref{thm:KLS_volume} and \ref{thm:KLS_sampling} in our setting, we need a membership oracle for the superlevel sets of the function $H_{X,y}$.
Such an oracle can clearly be implemented using the LP \eqref{opt:densitylp}.
We also need a separation oracle for these superlevel sets, which is given in the following lemma:

\begin{lemma}[Efficient Separation] \label{lem:sep}
There exists a $\poly(n, d)$ time separation oracle for the superlevel sets of $H_{X,y}(x) = \exp(h_{X,y}(x))$.
\end{lemma}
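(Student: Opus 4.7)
The plan is to evaluate $h_{X,y}(x_0)$ and simultaneously extract a supergradient by LP duality. Since $h_{X,y}$ is concave (as the pointwise infimum of affine functions that dominate the tent poles), each superlevel set $K_\tau = \{x \in \R^d : h_{X,y}(x) \geq \tau\}$ is convex, and because $\exp$ is monotone these coincide with the superlevel sets of $H_{X,y}$ after reparametrizing the threshold by $\ln t$. A separation oracle must, given a query $x_0$ and threshold $\tau$, either certify $x_0 \in K_\tau$ or output a hyperplane separating $x_0$ from $K_\tau$.

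First I would solve the packing LP \eqref{opt:densitylp} with target $x_0$,
\[
\max \sum_{i=1}^n \alpha_i y_i \quad \text{s.t.}\quad \sum_i \alpha_i X_i = x_0,\ \sum_i \alpha_i = 1,\ \alpha_i \geq 0,
\]
whose dual is
\[
\min\ \lambda^\top x_0 + \mu \quad \text{s.t.}\quad \lambda^\top X_i + \mu \geq y_i \text{ for all } i \in \{1,\ldots,n\}.
\]
Suppose first that $x_0 \in S_n$, so the primal is feasible and bounded and strong LP duality yields a common optimal value equal to $h_{X,y}(x_0)$. The dual optimum $(\lambda,\mu)$ defines an affine function $a(x) = \lambda^\top x + \mu$ that majorizes $h_{X,y}$ on $S_n$: dual feasibility gives $a(X_i) \geq y_i$, hence for any convex combination $x = \sum_i \alpha_i X_i$ we have $a(x) = \sum_i \alpha_i a(X_i) \geq \sum_i \alpha_i y_i$, and taking the supremum over such representations yields $a(x) \geq h_{X,y}(x)$. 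Strong duality forces $a(x_0) = h_{X,y}(x_0)$, so $\lambda$ is a supergradient at $x_0$. If $h_{X,y}(x_0) \geq \tau$, declare $x_0 \in K_\tau$. Otherwise, for every $x \in K_\tau$,
\[
\lambda^\top(x - x_0) \geq h_{X,y}(x) - h_{X,y}(x_0) \geq \tau - h_{X,y}(x_0) > 0,
\]
so the hyperplane $\{z : \lambda^\top z = \lambda^\top x_0\}$ separates $x_0$ from $K_\tau$.

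When $x_0 \notin S_n$ the primal is infeasible; a Farkas certificate, obtained as the direction of unboundedness of the dual or via a standard phase-one LP, produces $\lambda \in \R^d$ and $\mu \in \R$ with $\lambda^\top X_i + \mu \geq 0$ for all $i$ and $\lambda^\top x_0 + \mu < 0$, which separates $x_0$ from $S_n \supseteq K_\tau$. All LPs involved have $O(nd)$ variables and constraints with coefficients of input bit-length, so are solvable in $\poly(n,d)$ time. The main subtlety is the justification that any dual optimum yields a valid supergradient: this rests on the fact that dual feasibility $a(X_i) \geq y_i$ at the vertex heights forces $a$ to dominate the entire tent graph on $S_n$ by definition of $h_{X,y}$ as the upper envelope of convex combinations of tent poles, after which the separating hyperplane falls out of the supergradient inequality.
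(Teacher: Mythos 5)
Your proof is correct and follows essentially the same route as the paper's: both evaluate $h_{X,y}$ via the packing LP and use the dual covering LP's optimal affine majorant of the tent function to produce the separating hyperplane (the paper places the hyperplane at level $\ln l - \delta/2$, you place it through the query point via the supergradient inequality, which is an equivalent choice). Your explicit Farkas treatment of the case $x_0 \notin S_n$ is a small point of added completeness that the paper leaves implicit.
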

\begin{proof}
To construct our separation oracle, we will rely on the covering LP that is dual to the packing LP used to evaluate a tent function.
The dual to the packing LP looks for the hyperplane that is above all the $(X_i,y_i)$ 
that has minimal $y$ at $x$. More specifically, it is the following LP:
\begin{lp} \label{eqn:covering-lp}
\mini{\beta_0 + \sum_{j=1}^d \beta_j x_j}
\st \con{\beta \in \R^{d+1}, \beta_0 + \sum_{j=1}^d \beta_j X_{i,j} \geq y_i, i \in [n]\;,}
\end{lp}
where $X_{i,j}$ is the $j$-th coordinate of the vector $X_i$.
Now suppose that we are interested in a super level set $L_{H_{X,y}}(l)$. We can use the above LP 
to compute $h_{X,y}(x)$ (and thus $H_{X,y}(x)$) and check if it is in the superlevel set. 
Suppose that it is not, then there will be a solution $\beta \in \R^{d+1}$ 
whose value is below $ \ln l$, say $ \ln l-\delta$ for some $\delta>0$. Consider an $x' $ 
in the halfspace $\beta_0 + \sum_{j=1}^d \beta_j x'_j \leq \ln l - \delta/2$ which has $x$ in the interior. 
Since $x$ does not appear in the objective, $\beta$ is a feasible solution for the dual LP 
\eqref{eqn:covering-lp} with $y, x'$, and so $h_y(x') \leq \ln l-\delta/2$, which implies that 
$x'$  is not in the superlevel set. Therefore, $\beta_0 + \sum_j \beta_j x'_j = \ln l - \delta/2$ 
is a separating hyperplane for $x$ and the level set. 
This completes the proof.
\end{proof}

Given all of the above ingredients, we are now ready to prove the main result of this section.

\begin{proof}[Proof of Lemma \ref{lem:sampling}]
We first prove part (1) of the assertion.
To that end we analyze the sampling procedure described in Algorithm \ref{alg:sampling}.
Recall that 
$m=1+\lceil \|y\|_\infty\rceil$,
and for any $i\in [m]$, we define the superlevel set
\[
L_i = \{x\in \R^d : H_{X,y}(x) \geq M_{H_{X,y}} \cdot 2^{-i}\} \;.
\]
For any $x\in \R^d$ recall that 
\[
G_{X,y}(x) = M_{H_{X,y}} 2^{-\lfloor \log_2(M_{H_{X,y}}/H_{X,y}(x)) \rfloor} \;.
\]
For any $A\subseteq \R^d$, let $\chi_A:\R^d\to\{0,1\}$ be the indicator function for $A$.
It is immediate that for all $x\in \R^d$,
\begin{align*}
G_{X,y}(x) &= M_{H_{X,y}} \sum_{i=1}^{\infty} 2^{-i} \chi_{L_i}(x) \\
 &= M_{H_{X,y}} \sum_{i=1}^m 2^{-i} \chi_{L_i}(x) + 2^{-m} \chi_{L_i}(m) & \text{(since $H_{X,y}(x)=0$ for all $x\notin L_m$)} 
\end{align*}
Let
\[
c = \sum_{i=1}^m 2^{-i} \vol(L_i) + 2^{-m}  \vol(L_m).
\]
We have
\begin{align}
\int_{\R^d} G_{X,y}(x) \dd x &= M_{H_{X,y}} \left(\sum_{i=1}^m 2^{-i} \vol(L_i) + 2^{-m}  \vol(L_m)\right) = M_{H_{X,y}} c. \label{eq:Gy_integral}
\end{align}
Let 
\[
\widehat{G}_{X,y}(x) = G_{X,y}(x) / (M_{H_{X,y}} c).
\]
It follows by \eqref{eq:Gy_integral} that $\widehat{G}_{X,y}$ is a probability density function.

Let $D$ be the probability distribution on $\{1,\ldots,m\}$, where
\[
\Pr_{I\sim D}[I = i] = \left\{\begin{array}{ll}
\vol(L_i) \cdot 2^{-i} / c & \text{ if $i\in \{1,\ldots,m-1\}$}\\
2\cdot \vol(L_m) \cdot 2^{-m} / c & \text{ if $i=m$}
\end{array}\right.
\]
For any $i\in [m]$, let $u_i$ be the uniform probability density function on $L_i$.
To sample from $\widehat{G}_{X,y}$, we can first sample $I\sim D$, and then sample $Z\sim u_I$.


Recall that $\widehat{p}_{X,y}:\R^d \to \R_{\geq 0}$ is the probability density function obtained by normalizing $H_{X,y}$; 
that is, for all $x\in \R^d$ we have
\[
p_{X,y}(x) = H_{X,y}(x) / c',
\]
where 
\[
c' = \int_{\R^d} H_{X,y}(x) \dd x.
\]
Consider the following random experiment:
first sample $Z\sim \widehat{G}_y$, and then accept with probability $H_{X,y}(Z)/G_{X,y}(Z)$; conditioning on accepting, the resulting random variable $Z\in \R^d$ is distributed according to $\widehat{H}_{X,y}$.
Note that since for all $x\in \R^d$, $G_{X,y}(x)/2 \leq H_{X,y}(x) \leq G_{X,y}(x)$, it follows that we always accept with probability at least $1/2$.
Let $\alpha$ be the probability of accepting.
Then 
\begin{align*}
\alpha &= \int_{\R^d} \widehat{G}_{X,y}(x) (H_{X,y}(x)/G_{X,y}(x))  \dd x,
\end{align*}
and thus 
\begin{align}
\int_{\R^d} H_{X,y}(x) \dd x &= \int_{\R^d} G_{X,y}(x) (H_{X,y}(x)/G_{X,y}(x)) \dd x  \notag \\
 &= M_{H_{X,y}} c \int_{\R^d} \widehat{G}_{X,y}(x) (H_{X,y}(x)/G_{X,y}(x)) \dd x \notag \\ 
 &= M_{H_{X,y}} c \alpha \label{eq:mass_Hy} \;.
\end{align}
By Theorem \ref{thm:KLS_volume}, for each $i\in [m]$, we compute an estimate, $\widetilde{\vol}(L_i)$, to  $\vol(L_i)$, to within relative error $\delta$, using $\poly(d, 1/\delta, \log(1/\tau'))$ oracle calls, with probability at least $\tau'$, where $\tau'=\tau/n^b$, for some constant $b>0$ to be determined;
moreover, by Theorem \ref{thm:KLS_sampling}, we can efficiently sample, 
using $\poly(d, 1/\delta)$ oracle calls, from a probability distribution $\widetilde{u}_i$ 
with $\|u_i - \widetilde{u}_i\| \leq \delta$. 
Each of these oracle calls is a membership query in some superlevel set of $H_{X,y}$.
This membership query can clearly be implemented 
if we can compute that value $H_y$ at the desired query point $x$, 
which can be done in time $\poly(n,d)$ using LP~\eqref{opt:densitylp}.
Thus, each oracle call takes time $\poly(n,d)$.
Let
\begin{align}
\widetilde{c} &= \sum_{i=1}^m 2^{-i} \widetilde{\vol}(L_i) + 2^{-m} \widetilde{\vol}(L_m). \label{eq:c_tilde}
\end{align}
Since for all $i\in [m]$, $\vol(L_i)/(1+\delta)\leq \widetilde{\vol}(L_i) \leq \vol(L_i) (1+\delta)$, it is immediate that 
\[
c / (1+\delta) \leq \widetilde{c}\leq c (1+\delta) \;.
\]
Recall that Algorithm \ref{alg:sampling} uses the probability distribution $\widetilde{D}$ on $[m]$, where
\[
\Pr_{I\sim \widetilde{D}} [I = i] = \left\{\begin{array}{ll}
\widetilde{\vol}(L_i) \cdot 2^{-i} / \widetilde{c} & \text{ if $i\in \{1,\ldots,m-1\}$}\\
2 \cdot \widetilde{\vol}(L_m) \cdot 2^{-m} / \widetilde{c} & \text{ if $i=m$}
\end{array}\right.
\]
Consider the following random experiment, which corresponds to Steps 5--6 of Algorithm \ref{alg:sampling}:
We first sample $I\sim \widetilde{D}$, and then we sample $Z\sim \widetilde{u}_I$.
The resulting random vector $Z\in \R^d$ is  distributed according to 
\[
 \widetilde{G}_{X,y}(x) = \frac{1}{\widetilde{c}} \left ( \sum_{i=1}^m 2^{-i} \widetilde{\vol}(L_i) \widetilde{u}_i(x) + 2^{-m}  \widetilde{\vol}(L_m) \widetilde{u}_m(x) \right).
\]

Next, consider the following random experiment, which captures Steps 5--8 of Algorithm \ref{alg:sampling}:
We sample $Z\sim \widetilde{G}_{X,y}$, and we accept with probability $H_{X,y}(Z)/G_{X,y}(Z)$.
Let $\widetilde{H}_{X,y}$ be the resulting probability density function supported on $\R^d$ obtained by conditioning the above random experiment on accepting.
Let $\widetilde{\alpha}$ be the acceptance probability.
We have
\[
 \widetilde{\alpha} =\int_{\R^d} (H_{X,y}(x)/G_{X,y}(x)) \widetilde{G}(x) \dd x.
\]

We have
\begin{align*}
\|D_i - \widetilde{D}_i\|_1 &= \sum_{i=1}^{m-1} 2^{-i} \cdot \left|\frac{\vol(L_i)}{c}-\frac{\widetilde{\vol}(L_i)}{\widetilde{c}}\right| + 2 \cdot 2^{-m} \cdot \left|\frac{\vol(L_m)}{c}-\frac{\widetilde{\vol}(L_m)}{\widetilde{c}}\right|\\
 &= \sum_{i=1}^{m-1} 2^{-i} \cdot \left|\frac{\vol(L_i)}{c}-\frac{\vol(L_i)(1+\delta)}{c/(1+\delta)}\right| + 2 \cdot 2^{-m} \cdot \left|\frac{\vol(L_m)}{c}-\frac{\vol(L_m)(1+\delta)}{c/(1+\delta)}\right|\\
 &\leq \sum_{i=1}^{m-1} 2^{-i} \frac{\vol(L_i)}{c} 3\delta + 2 \cdot 2^{m} \frac{\vol(L_m)}{c} 3\delta \\
 &= 3\delta.
\end{align*}
It follows that
\begin{align*}
\|\widehat{G}_{X,y} - \widetilde{G}_{X,y}\|_1 &\leq \|D_i - \widetilde{D}_i\| + \max_i \| u_i-\widetilde{u}_i\|_1 \leq 3\delta + \delta \leq 4 \delta,
\end{align*}
and so
\[
|\alpha-\widetilde{\alpha}| \leq \int_{\R^d} \frac{H_{X,y}(x)}{G_{X,y}(x)} \left|\widehat{G}_{X,y}(x) - \widetilde{G}_{X,y}(x)\right| \dd x \leq \int_{\R^d} \left|\widehat{G}_{X,y}(x) - \widetilde{G}_{X,y}(x)\right| \dd x \leq \|\widehat{G}_{X,y} - \widetilde{G}_{X,y} \|_1 \leq 4 \delta.
\]
Note that $p_{X,y}(x)/\alpha = \widehat{G}_{X,y}(x) \frac{H_{X,y}(x)}{G_{X,y}(x)}$ and $\widetilde{H}_{X,y}(x)/\widetilde{\alpha} = \widetilde{G}_{X,y}(x) \frac{H_{X,y}(x)}{G_{X,y}(x) }$
and so
\begin{align*}
\|\widetilde{H}_{X,y} - p_{X,y}\|_1 &\leq \alpha \left(\|\widetilde{H}_{X,y}/\alpha - p_{X,y}/\alpha\|_1 + \|p_{X,y}/\widetilde{\alpha} - p_{X,y}/\alpha\|_1\right) & \text{(by the triangle inequality)}\\
 &= \alpha \left(\|\widetilde{H}_{X,y}/\alpha - p_{X,y}/\alpha \|_1 + |1/\widetilde{\alpha} - 1/\alpha|\right) \\
 &= \alpha \int_{\R^d} (H_{X,y}(x)/G_{X,y}(x)) | \widetilde{G}_{X,y}(x) - p_{X,y}(x) | + |\alpha - \widetilde{\alpha}|/\widetilde{\alpha} \\
 &\leq \|p_{X,y} - \widetilde{G}_{X,y} \|_1 + 2 |\alpha - \widetilde{\alpha}| \\
 &\leq 12 \delta,
\end{align*}
which establishes that the random vector $Z$ that Algorithm \ref{alg:sampling} 
outputs is distributed according to a probability distribution 
$\widetilde{\phi}$ such that $\|\widetilde{\phi}-p_{X,y}\|_1\leq 10\delta$, as required. 

In order to bound the running time, we observe that all the steps of the algorithm 
can be implemented in time $\poly(n, d, \|y\|_\infty, 1/\delta, \log(1/\tau))$. The most expensive 
operation is approximating the volume of a superlevel set $L_i$ and sampling for $L_i$, 
using Theorems \ref{thm:KLS_volume} and \ref{thm:KLS_sampling}.
By the above discussion, using LP \eqref{opt:densitylp} and Lemma \ref{lem:sep} 
each of these operations can be implemented in time $\poly(n, d, 1/\delta,\log(1/\tau))$.
The algorithm succeeds if all the invocations of the algorithm of Theorem \ref{thm:KLS_volume} 
are successful; by the union bound, this happens with probability 
at least $1-\tau'\poly(n) = 1-\tau' n^b \poly(n) \geq 1-\tau$, 
where the inequality follows by choosing some sufficiently large constant $b>0$.
This establishes part (1) of the Lemma.

It remains to prove part (2).
By \eqref{eq:mass_Hy} we have that $\gamma = M_{H_{X,y}} c \alpha$.
Algorithm ${\cal A}_2$ proceeds as follows.
First, we compute $M_{H_{X,y}}$.
By the convexity of $h_{X,y}$, it follows that the maximum value of $M_{H_{X,y}}$ 
is attained on some sample point $x_i$; that is, 
$M_{H_{X,y}} = \max_{i\in [n]} H_{X,y}(x_i)$.
Since we can evaluate $H_y$ in polynomial time using 
LP \eqref{opt:densitylp}, it follows that we can also compute $M_{H_{X,y}}$ in polynomial time.
Next, we compute $\widetilde{c}$ using formula \ref{eq:c_tilde}.
Arguing as in part (1), this can be done in time $\poly(n, 1/\delta, \log(1/\tau))$, and with probability at least $1-\tau/2$.
Finally, we estimate $\widetilde{\alpha}$.
The value of $\widetilde{\alpha}$ is precisely the acceptance probability of the random experiment described in Steps 5--8 of Algorithm \ref{alg:sampling}.
Since $\alpha\geq 1/2$, and $|\alpha-\widetilde{\alpha}| \leq 4\delta$, it follows that for $\delta<1/16$, we can compute an estimate $\bar{\alpha}$ of the value of $\widetilde{\alpha}$, to within error $1+O(\delta)$, 
with probability at least $1-\tau/2$, after $O(\log(1/\tau))$ repetitions of the random experiment.
The output of algorithm ${\cal A}_2$ is $\gamma'=M_{H_{X,y}} \widetilde{c} \bar{\alpha}$.
We obtain that, with probability at least $1-\tau$, we have
\[
\gamma' = M_{H_{X,y}} \widetilde{c} \bar{\alpha}  \leq M_{H_{X,y}} c (1+\delta) \alpha (1+O(\delta))  = \gamma (1+O(\delta)) \;,
\]
and
\[
\gamma' = M_{H_{X,y}} \widetilde{c} \bar{\alpha}  \geq M_{H_{X,y}} (c / (1+\delta)) ( \alpha /(1+O(\delta)) ) = \gamma / (1+O(\delta)) \;,
\]
which concludes the proof.
\end{proof}


\section{Conclusions} \label{sec:conc}
In this paper, we gave a $\poly(n, d, 1/\eps)$ time 
algorithm to compute an $\eps$-approximation of the log-concave
MLE based on $n$ points in $\R^d$. Ours is the first algorithm
for this problem with a sub-exponential dependence in the dimension $d$.
We hope that our approach may lead to more practical methods for computing
the log-concave MLE in higher dimensions than was previously possible.

One concrete open question is whether there exists an algorithm for computing the log-concave MLE that runs in time 
$\poly(n, d, \log(1/\eps))$, instead of the $\poly(n, d, 1/\eps)$ that we achieve. Such an algorithm would likely be technically interesting as it may require going beyond the first-order methods we employ.  More broadly, it seems worth investigating whether the MLE can be efficiently computed for other natural classes of non-parametric distributions.  Alternately, one could hope that there is a simple set of natural properties such that, if a class of distributions satisfies those properties, then the MLE can be efficiently computed.

\medskip


\noindent {\bf Acknowledgments:} 
Ilias Diakonikolas was supported by NSF Award CCF-1652862 (CAREER) and a Sloan Research Fellowship. Alistair Stewart was supported by a USC startup grant. Anastasios Sidiropoulos was supported by NSF Award CCF-1453472 (CAREER) and NSF grants CCF-1423230 and CCF-1815145. Brian Axelrod was supported by NSF Fellowship grant DGE-1656518, NSF award CCF-1763299, and a Finch family fellowship. Brian Axelrod and Gregory Valiant were supported by NSF awards CCF-1704417, and an ONR Young Investigator Award (N00014-18-1-2295).


\bibliographystyle{plainnat}
\bibliography{allrefs}

\newpage

\appendix

\section*{\Large Appendix}

\section{Introduction To Exponential Families}\label{app:expintro}
In this section, we give a brief overview of exponential families that covers just the material necessary to appreciate the connection between exponential families and the log-concave maximum likelihood problem. We refer to \cite{wainwright2008graphical} for a more complete treatment of exponential families. 

An \emph{exponential family} parameterized by $\theta \in \mathbb R^n$ with \emph{sufficient statistic} $T(x)$, with carrier density $h$ measurable and non-negative is a family of probability distributions of the form
 $$p_\theta (x)  = \exp (\langle T(x), \theta \rangle - A(\theta)) h(x).$$
The \emph{log-partition} function $A(\theta)$ is defined to normalize the integral of the density 
$$ A(\theta) = \log \int \exp( \langle T(x), \theta \rangle ) h(x) dx.$$
It makes sense to restrict our attention to values of $\theta$ that give a valid probability density. The set of \emph{Canonical Parameters} $\Theta$ is defined such that $\Theta = \{ \theta \mid  A(\theta) < \infty\}$.

We say that an exponential family is \emph{minimal} if $\theta_1 \neq \theta_2$ implies $p_{\theta_1} \neq p_{\theta_2}$. This is necessary and sufficient for statistical identifiability. 

One reason exponential families are well studied is that we have an algorithm that computes the maximum likelihood estimate via a convex program. 

The maximum likelihood parameters $\theta^\star$ for a set of iid samples $X_1, \ldots,  X_n$ are: 
\begin{align}
    \theta^\star &= \argmax_\theta \prod\limits_i p_\theta(X_i) \nonumber 
                 = \argmax_\theta \log \prod\limits_i p_\theta(X_i) \nonumber \\
                 &= \argmax_\theta \sum\limits_i \langle T(X_i), \theta \rangle - nA(\theta) - \sum\limits_i \log h(x_i)
                 = \argmax_\theta   \left \langle \frac 1 n \sum\limits_i T(X_i), \theta  \right \rangle - A(\theta) \label{opt:explike}
\end{align} 
We refer to the optimization in Equation \eqref{opt:explike} as the \emph{exponential maximum likelihood optimization}. The last equation helps highlight why $T(x)$ is referred to as the sufficient statistic. No other information is needed about the data points to compute both the likelihood and the maximum likelihood estimator. 

One reason why exponential families are important is that the geometry of the optimization in Equation \eqref{opt:explike} has several nice properties. 
\begin{fact} \label{fact:expfam}
$A(\theta)$ of exponential families satisfies the following properties: (a) $A(\theta) \in C^\infty$ on $\Theta$. (b) $A(\theta)$ is convex. (c) $\Delta A(\theta) = \mathbb E_{x \sim p(\theta)} [T(x)]$. (d) If the exponential family is minimal, $A(\theta)$ is strictly convex.
\end{fact}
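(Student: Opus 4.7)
The plan is to derive all four parts directly from the explicit definition $A(\theta) = \log \int \exp(\langle T(x), \theta \rangle) h(x)\,dx$ together with standard properties of moment generating functions. The key analytic observation, on which parts (a) and (c) both rest, is that on the interior of $\Theta$ the integrand can be differentiated under the integral sign arbitrarily many times. This is a routine dominated-convergence argument: if $\theta \in \mathrm{int}(\Theta)$, then for a small enough ball around $\theta$ the function $\exp(\langle T(x), \theta'\rangle) h(x)$ is integrable uniformly in $\theta'$, and any polynomial prefactor in $T(x)$ arising from differentiation can be absorbed by slightly shrinking the ball and bounding polynomial growth by exponential growth.

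Granting that, part (a) follows because iterated differentiation shows $\int \exp(\langle T(x),\theta\rangle)h(x)\,dx$ is $C^\infty$ and strictly positive on $\mathrm{int}(\Theta)$, so $A$, being its logarithm, is $C^\infty$ there. For part (c), I would apply Leibniz's rule once to obtain
\[
\nabla A(\theta) \;=\; \frac{\int T(x)\exp(\langle T(x),\theta\rangle)h(x)\,dx}{\int \exp(\langle T(x),\theta\rangle)h(x)\,dx} \;=\; \mathbb{E}_{x\sim p_\theta}[T(x)].
\]
For part (b), I would avoid differentiability altogether and argue directly by H\"older's inequality: for $\lambda \in (0,1)$ and $\theta_0,\theta_1 \in \Theta$, write $\exp(\langle T(x), \lambda \theta_1 + (1-\lambda)\theta_0\rangle) = \exp(\langle T(x),\theta_1\rangle)^\lambda \exp(\langle T(x),\theta_0\rangle)^{1-\lambda}$ and apply H\"older with conjugate exponents $1/\lambda$ and $1/(1-\lambda)$ against the measure $h(x)\,dx$. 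This yields $\exp A(\lambda\theta_1 + (1-\lambda)\theta_0) \leq \exp(A(\theta_1))^\lambda \exp(A(\theta_0))^{1-\lambda}$, and taking logarithms gives convexity.

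Part (d) I would do last, and it is the only part with a genuine conceptual step. Differentiating the identity from (c) once more (again by Leibniz) gives $\nabla^2 A(\theta) = \mathrm{Cov}_{x\sim p_\theta}(T(x))$, which is automatically positive semidefinite. Strict convexity is thus equivalent to showing that for every $\theta \in \mathrm{int}(\Theta)$ this covariance matrix is strictly positive definite. If instead there were $v\neq 0$ with $\langle v, T(x)\rangle$ equal to some constant $c$ almost surely under $p_\theta$, then the ratio $p_{\theta+v}(x)/p_\theta(x) = \exp(\langle v,T(x)\rangle - (A(\theta+v)-A(\theta)))$ would be constant on the support of $p_\theta$; the integrability of both densities forces the constant to be $1$, so $p_{\theta+v} = p_\theta$, contradicting minimality. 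The only real obstacle in the whole argument is the uniform-integrability justification for differentiating under the integral, which is standard but should be written out explicitly; everything else reduces to H\"older and elementary manipulations.
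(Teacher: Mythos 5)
Your proof is correct and is the standard derivation of these properties: H\"older's inequality for convexity, differentiation under the integral sign (justified by the usual dominated-convergence/exponential-domination argument) for smoothness and the moment identity $\nabla A(\theta)=\E_{x\sim p_\theta}[T(x)]$, and the covariance-degeneracy argument for strict convexity under minimality. The paper states this Fact without proof, citing it as standard from the exponential-family literature, so there is nothing to compare against beyond the minor observation that parts (a), (c), and (d) hold on the interior of $\Theta$, which is exactly the regime your argument addresses and what the standard statement asserts.
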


Note that properties $(b), (c)$ are very similar to the definition of locally exponential families. The fact that tent distributions maintain some of these properties is exactly what enables the efficient algorithm in this paper.  

\subsection{Analogy Between Log-Concave MLE and Exponential Family MLE} \label{app:anal}
In the case of exponential families, at each time step, the algorithm maintains a distribution (from the hypothesis class) and generates a single sample from this distribution. The sufficient statistic of the exponential family can then be used to compute a subgradient. The computational efficiency follows from the convexity of the log-likelihood function, and existence of efficient samplers and procedures for computing the sufficient statistic. We portray this stochastic gradient method for exponential families, together with the analogous form of our algorithm for log-concave distributions.  
\begin{minipage}[t]{0.5\hsize}
\null
 {\centering
 \textbf{Exponential Family MLE\\}}
 Optimization Formulation:
 $$ \max\limits_y \langle\mu, y \rangle - \log \int \exp \left (\langle T(x), y \rangle \right) dx $$ 
 \begin{algorithm}[H]
 \caption{Stochastic First Order Algorithm}
 \begin{algorithmic}
    \Function{ComputeExpFamMLE}{$X_1,...X_n$}
    \State $y \gets y_{init}$
    \For{$i \gets 1,m$}
    \State $s \sim p(y)$ \Comment{sample}
    \State  $y \gets y + \eta_i \left  (\mu - T(s) \right)$ \Comment{subgradient}
    \EndFor
	\Return $y$
    \EndFunction
 \end{algorithmic}
  \end{algorithm}
\end{minipage}%
\begin{minipage}[t]{0.5\hsize}
\null
 {\centering
 \textbf{Log-Concave MLE\\}}
 Optimization Formulation:
 $$ \max\limits_y \langle \mathbbm 1, y \rangle - \log \int \exp \left (\langle T_{X,y}(x), y \rangle \right) dx $$ 
 \begin{algorithm}[H]
 \caption{Stochastic First Order Algorithm}
 \begin{algorithmic}
    \Function{ComputeLogConMLE}{$X_1,...X_n$}
    \State $y \gets 0$
    \For{$i \gets 1,m$}
    \State $s \sim p(X,y)$ 
    \State  $y \gets y + \eta_i \left  (\frac 1 n\mathbbm 1_n - T_{X,y}(s) \right)$ 
    \EndFor
	\Return $y$
    \EndFunction
 \end{algorithmic}
  \end{algorithm}
\end{minipage}

\section{Learning Multivariate Log-Concave Densities}
\label{app:learning}

In this section, we combine our Theorem~\ref{thm:main}
with known sample complexity bounds to give
the first computationally efficient and sample near-optimal
proper learner for multivariate log-concave densities.

Recall that the squared Hellinger loss between two distributions with densities 
$f, g: \R^d \to \R_+$ is $h^2(f, g) = (1/2) \cdot \int_{\R^d} ( \sqrt{f(x)} - \sqrt{g(x)})^2 dx$.
Combined with the known rate of convergence of the log-concave MLE with respect to 
the squared Hellinger loss~\cite{CDSS18, DaganK19}, Theorem~\ref{thm:main} implies the following:

\begin{theorem} \label{thm:global-loss}
Fix $d \in \Z_+$ and $0<\eps, \tau<1$. 
Let $n = \tilde{\Omega} \left( (d^2/\eps)\ln(1/\tau))\right)^{(d+1)/2}$. 
There is an algorithm that, given $n$ iid samples from an unknown log-concave density $f_0 \in  \mathcal{F}_d$,
runs in $\poly(n)$ time and outputs a log-concave density $h^{\ast} \in \mathcal{F}_d$ such that
with probability at least $1-\tau$, we have that $h^2(h^{\ast}, f_0) \leq \eps$.
\end{theorem}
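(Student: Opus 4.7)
The plan is to combine the algorithmic guarantee of Theorem~\ref{thm:main} with the known statistical rate of convergence of the exact log-concave MLE in squared Hellinger loss. Specifically, I would invoke the sharp bound of \cite{DaganK19} (building on \cite{CDSS18}): for any $f_0 \in \mathcal{F}_d$ and any $\delta, \tau' > 0$, given $m = \tilde{\Omega}_d\bigl((1/\delta)^{(d+1)/2}\log(1/\tau')\bigr)$ i.i.d.\ samples from $f_0$, the exact MLE $\mle$ satisfies $h^2(\mle, f_0) \leq \delta$ with probability at least $1 - \tau'$. Applying this with $\delta = \eps/4$ and $\tau' = \tau/2$, the sample size $n$ stated in Theorem~\ref{thm:global-loss} suffices to guarantee $h^2(\mle, f_0) \leq \eps/4$ with probability at least $1 - \tau/2$.

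Next I would invoke Theorem~\ref{thm:main} with accuracy parameter $\eps' := n\eps/C$ for a sufficiently large absolute constant $C$ and failure probability $\tau/2$, producing in time $\poly(n, d, 1/\eps', \log(1/\tau)) = \poly(n)$ a density $h^\ast \in \mathcal{F}_d$ with $\ell(h^\ast) \geq \ell(\mle) - \eps'$ with probability at least $1 - \tau/2$; equivalently, the per-sample log-likelihood deficit is at most $\eps'/n = \eps/C$. The scaling $\eps' = \Theta(n\eps)$ is dictated by the fact that the log-likelihood in Theorem~\ref{thm:main} is unnormalized (of order $n$), so an additive $\eps'$ slack in $\ell$ is what is needed to obtain an $O(\eps)$ slack per sample; since $1/\eps' \leq n^{O(1)}$, the overall running time remains polynomial in $n$. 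The output of the algorithm is this $h^\ast$.

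The remaining step, and the main obstacle, is to convert the per-sample log-likelihood deficit into a squared Hellinger bound $h^2(h^\ast, \mle) \leq \eps/4$. For this I would invoke the classical ``basic inequality'' of empirical process theory for MLEs (as in van de Geer's monograph or the Birg{\'e}--Massart framework), which for the exact MLE bounds $h^2(\mle, f_0)$ by an empirical process term whose expectation is controlled by the bracketing entropy of $\mathcal{F}_d$. The same argument applied to $h^\ast$ in place of $\mle$ produces an additional slack term of order $\eps'/n = \eps/C$ on the right-hand side, so taking $C$ large and reusing the bracketing entropy bound for $\mathcal{F}_d$ from \cite{CDSS18, DaganK19} yields $h^2(h^\ast, \mle) \leq \eps/4$ with probability at least $1 - \tau/2$ on the same sample. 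The boundedness structure established by Lemma~\ref{lemma:radiusbound} furnishes the envelope control needed to legitimately apply these entropy bounds to our candidate $h^\ast$.

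Granted the two Hellinger bounds $h^2(\mle, f_0) \leq \eps/4$ and $h^2(h^\ast, \mle) \leq \eps/4$, the triangle inequality for Hellinger distance (which is a metric) gives $h(h^\ast, f_0) \leq h(h^\ast, \mle) + h(\mle, f_0)$; squaring and using $(a+b)^2 \leq 2a^2 + 2b^2$ yields $h^2(h^\ast, f_0) \leq 2 h^2(h^\ast, \mle) + 2 h^2(\mle, f_0) \leq \eps$. A union bound over the two failure events (the statistical event and the algorithmic event) gives overall success probability at least $1 - \tau$, completing the argument. The hard part is really the empirical-process step: the exact MLE version is standard, but accommodating the $\eps'$ slack uniformly over $\mathcal{F}_d$ while keeping all quantities at the sample size stated in the theorem requires a careful (but routine) reprise of the bracketing entropy calculation.
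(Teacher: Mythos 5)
Your high-level plan is the same as the paper's: invoke Theorem~\ref{thm:main} to get a near-maximizer $h^{\ast}$ of the likelihood, and then upgrade the near-optimal MLE rate of \cite{CDSS18,DaganK19} so that it applies to approximate maximizers, not just the exact MLE. Your bookkeeping of the accuracy parameter (an additive slack of order $\eps$ in the \emph{per-sample} log-likelihood, hence $\poly(n)$ total runtime) also matches what the paper's proof actually uses.

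The gap is in your middle step. You propose to show $h^2(h^{\ast}, \mle) \leq \eps/4$ from the likelihood deficit and then combine with $h^2(\mle, f_0) \leq \eps/4$ by the triangle inequality. But the ``basic inequality'' machinery you cite does not control the Hellinger distance between the approximate maximizer and the \emph{exact} maximizer; it controls the distance between a (possibly approximate) maximizer and the \emph{data-generating} density $f_0$. There is no standard bound of the form $h^2(h^{\ast},\mle) \lesssim (\ell(\mle)-\ell(h^{\ast}))/n$ for this nonparametric class, and I do not see how to produce one without first relating both densities to $f_0$ — which would make the argument circular. The repair is simple and is exactly what the paper does: run the slack-tolerant version of the rate argument \emph{once, directly against $f_0$}, obtaining $h^2(h^{\ast}, f_0) \leq \eps$ in one shot; the separate bound on $h^2(\mle,f_0)$ and the triangle inequality then become unnecessary. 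Concretely, the paper reruns Section~3 of \cite{CDSS18} with $h^{\ast}$ in place of $\mle$, replacing their uniform convergence lemma by Theorem~4 of \cite{DaganK19} (uniform convergence of empirical measures over all convex sets, applied to the superlevel sets of $\ln h^{\ast}$, which is where the improved sample complexity and the Lipschitz-in-likelihood slack both enter). The chain is: the empirical log-likelihood of $h^{\ast}$ is within $O(\eps)$ of that of $f_0$; uniform convergence over convex sets transfers this to the population, giving $\KL(f_0\,\|\,g_h) = O(\eps)$ for a suitable truncation $g_h$ of $h^{\ast}$; then $h^2 \leq \KL$ and $h^2 \leq \dtv$ plus the triangle inequality between $f_0$, $g_h$, and $h^{\ast}$ finish the proof. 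One further hypothesis you omit but the paper needs for this truncation step: $h^{\ast}$ must be supported on the convex hull of the samples, which the algorithm's tent-density output guarantees.
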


We note that Theorem~\ref{thm:global-loss} yields the first efficient proper learning algorithm
for multivariate log-concave densities under a global loss function.  The proof follows by combining Theorem~\ref{thm:main} with the following lemma:
\begin{lemma}
Let $n=\Omega_d \left( (1/\eps) \ln(1/(\eps\tau))  \right)^{(d+1)/2}$. Let $\mle$ be the MLE of $n$ samples drawn from $f_0 \in \mathcal{F}_d$. Let $h^{\ast}$ be a log-concave density that is supported on the convex hull of the samples with $\ell(h^{\ast}) \geq \ell(\mle)-\eps/16$ . Then with probability  at least $1-\tau$ over the samples, $h^2(h^{\ast}, f_0) \leq \eps$.
\end{lemma}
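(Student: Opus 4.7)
My plan is to bound $h^2(h^*,f_0)$ via the Hellinger triangle inequality, using $\mle$ as the intermediate point. Since $h(\cdot,\cdot)$ is a metric and $(a+b)^2\le 2a^2+2b^2$,
$$h^2(h^*,f_0) \le 2h^2(h^*,\mle)+2h^2(\mle,f_0),$$
so it suffices to produce one statistical bound on $h^2(\mle,f_0)$ and one deterministic bound on $h^2(h^*,\mle)$.

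For the statistical term I would invoke the sharp sample complexity of the multivariate log-concave MLE as a black box: with the prescribed choice of $n$, the results of CDSS18 and DaganK19 yield $h^2(\mle,f_0)\le \eps/4$ with probability at least $1-\tau$ over the sample.

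The deterministic bound is the core of the argument and uses a standard geometric-interpolation trick. For $\alpha\in[0,1]$, I would define
$$\tilde f_\alpha(x) = \mle(x)^{1-\alpha}\, h^*(x)^\alpha \big/ Z_\alpha,\qquad Z_\alpha = \int \mle^{1-\alpha}(h^*)^\alpha\,\dd x.$$
Because $\ln\mle$ and $\ln h^*$ are both concave on $S_n$, the density $\tilde f_\alpha$ is log-concave, and $Z_\alpha\in(0,1]$ by H\"older. Applying the MLE optimality $\ell(\mle)\ge\ell(\tilde f_\alpha)$ and expanding $\ell(\tilde f_\alpha)=(1-\alpha)\ell(\mle)+\alpha\ell(h^*)-n\ln Z_\alpha$ yields
$$\ln Z_\alpha \;\ge\; -\frac{\alpha\bigl(\ell(\mle)-\ell(h^*)\bigr)}{n} \;\ge\; -\frac{\alpha\eps}{16n}.$$
Specializing to $\alpha=1/2$ and using $Z_{1/2}=\int\sqrt{\mle\,h^*}=1-h^2(\mle,h^*)$ together with the elementary inequality $\ln(1-x)\le -x$ produces $h^2(\mle,h^*)\le \eps/(32n)$, which is negligible compared to $\eps/4$. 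Combining both bounds in the triangle inequality gives $h^2(h^*,f_0)\le 2\eps/(32n)+\eps/2 \le \eps$, as required.

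The main technical obstacle is verifying that the interpolated density $\tilde f_\alpha$ is a legitimate log-concave competitor, which reduces to checking that $\ell(h^*)>-\infty$, i.e., that $h^*(X_i)>0$ on every sample. This follows from the hypothesis that $h^*$ is supported on $S_n$ combined with the classical fact (Theorem~2 of CSS10) that $\mle$ is strictly positive on $S_n$ and in particular on each $X_i$, so $\ell(\mle)$ is finite; the near-optimality assumption $\ell(h^*)\ge\ell(\mle)-\eps/16$ then forces $h^*(X_i)>0$. Beyond this check the argument is entirely deterministic, and the only probabilistic input is the known minimax bound on $h^2(\mle,f_0)$.
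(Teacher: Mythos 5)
Your argument is correct, and it takes a genuinely different route from the paper's. The paper proves the lemma by rerunning the machinery of Section~3 of \cite{CDSS18} with $h^{\ast}$ substituted for $\mle$ throughout: it truncates $h^{\ast}$ at a level $p_{\min}$, bounds the total variation distance to the truncation, controls ${\E}_{f_0}[\ln g_h] - {\E}_{f_n}[\ln g_h]$ via uniform convergence of empirical measures over convex sets (Lemma~\ref{lem:DK}), and converts the resulting KL bound to Hellinger. You instead decompose $h^2(h^{\ast},f_0) \leq 2h^2(h^{\ast},\mle) + 2h^2(\mle,f_0)$, cite the exact-MLE rate as a black box, and handle the first term by the geometric-interpolation competitor $\tilde f_\alpha \propto \mle^{1-\alpha}(h^{\ast})^\alpha$, which is a legitimate member of $\mathcal{F}_d$ (a convex combination of upper semi-continuous concave exponents is again one, and $Z_\alpha \in (0,1]$ by H\"older and the overlap of supports on $S_n$); the optimality of $\mle$ then yields the very strong deterministic stability bound $h^2(\mle,h^{\ast}) \leq \eps/(32n)$, and your arithmetic checks out under the paper's normalization $h^2(f,g) = 1 - \int\sqrt{fg}$. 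What your approach buys is modularity and brevity: the "approximate maximizers are Hellinger-close to the maximizer" step is clean, deterministic, and of independent interest. What the paper's approach buys is that it does not presuppose a high-probability Hellinger bound for the exact MLE with explicit $\ln(1/\tau)$ dependence matching the stated sample size --- it derives that statement (simultaneously for $\mle$ and $h^{\ast}$) from the more primitive empirical-process bound of \cite{DaganK19}. So the one point you should verify before relying on your version is that the cited works really do state the exact-MLE guarantee in the high-probability form you invoke, rather than in expectation or at constant confidence; if only the weaker form is citable, your reduction still stands but the statistical term must be re-derived exactly as the paper does.
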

We write $f_n$ for the empirical density over the samples $X_1,\dots,X_n$.
The proof is a minor modification of the arguments in Section 3 of \cite{CDSS18}, using the following lemma~\cite{DaganK19}:
\begin{lemma}[Theorem 4 from \cite{DaganK19}] \label{lem:DK}
For any $t > 0$, we have except with probability $2\exp(-2t^2)$ that for any convex set $C$,
$$|f_n(C)-f_0(C)| \leq O_d(n^{-2/(d+1)}) + t/\sqrt{n} \;.$$ \end{lemma}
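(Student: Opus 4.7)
The plan is to split the bound into two pieces: a concentration step that reduces the claim to controlling the expected supremum, and an empirical-process argument that bounds the expectation via log-concavity.

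For the concentration step, let $W(X_1,\ldots,X_n) := \sup_{C\text{ convex}} |f_n(C) - f_0(C)|$. Replacing a single sample changes $f_n(C)$ by at most $1/n$ for every convex $C$, so $W$ has bounded differences $1/n$. McDiarmid's inequality then gives $\Pr[|W - \mathbb{E} W| \ge t/\sqrt{n}] \le 2\exp(-2t^2)$ for every $t>0$, so it suffices to prove $\mathbb{E} W = O_d(n^{-2/(d+1)})$.

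For the expectation step, I would proceed in three stages. First, use standard log-concave tail concentration (after bringing $f_0$ to near-isotropic position) to identify a convex body $K\subseteq\mathbb{R}^d$ with $f_0(K^c) \le O_d(n^{-2/(d+1)})$; a crude VC bound for half-spaces (whose intersections cover $K^c$) controls $\mathbb{E}[f_n(K^c)]$ at the same order, so convex sets essentially intersect $K$. Second, within $K$, sandwich each convex set $C\cap K$ between polytopes $C^{-}\subseteq C\cap K\subseteq C^{+}$ drawn from a class of VC dimension $\widetilde{O}_d(\varepsilon^{-(d-1)/2})$, arranged so that $f_0(C^{+}\setminus C^{-})\le\varepsilon$. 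Such a sandwich exists by a Bronshtein-style quantitative approximation of convex bodies by polytopes, combined with the fact that $f_0$ is bounded above on $K$ by log-concavity. Third, apply a localized empirical-process inequality (Talagrand/Bousquet) to this polytope class, exploiting that the indicator $\mathbf{1}_{C^{+}\setminus C^{-}}$ has variance bounded by its mean $\varepsilon$. Balancing the approximation error $\varepsilon$ against the resulting variance-sensitive bracketing term produces $\varepsilon^{*}\asymp n^{-2/(d+1)}$.

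The main obstacle is the localization in the third step. A naive uniform VC bound of the form $\mathbb{E} W \lesssim \varepsilon + \sqrt{N(\varepsilon)/n}$ with $N(\varepsilon)\asymp\varepsilon^{-(d-1)/2}$ balances only at the slower rate $n^{-2/(d+3)}$. Obtaining the sharper $n^{-2/(d+1)}$ rate --- which matches the minimax Hellinger lower bound --- requires either a chaining argument using that the class of thin shells $C^{+}\setminus C^{-}$ has $L^2(f_0)$ diameter shrinking as $\sqrt{\varepsilon}$, or an equivalent peeling argument over geometric scales. Making the truncation at $K^c$, the polytope approximation, and the variance-sensitive chaining all cooperate to produce the claimed $O_d(n^{-2/(d+1)})$ bound is the technical heart of the Dagan--Kim proof, and is the step I expect to require the most care.
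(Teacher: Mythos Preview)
The paper does not prove this lemma at all: it is quoted verbatim as Theorem~4 of \cite{DaganK19} and used as a black box in the proof of the surrounding lemma (the $\texttt{\textbackslash begin\{proof\}}$ that immediately follows Lemma~\ref{lem:DK} is the proof of the \emph{preceding} lemma about $h^2(h^\ast,f_0)\le\eps$, not of Lemma~\ref{lem:DK} itself). So there is no in-paper argument to compare your proposal against.

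As for the proposal on its own merits: the McDiarmid step is exactly right and yields the $t/\sqrt{n}$ fluctuation term with the stated probability. Your outline for bounding $\mathbb{E}\,W$ is in the right spirit---truncate to a bounded region, approximate convex sets by polytopes at the Bronshtein rate, and then run a localized empirical-process argument---and you have correctly diagnosed why a non-localized VC bound stalls at $n^{-2/(d+3)}$. One small correction: the cited work is Dagan--Kur, not Dagan--Kim. More substantively, what you describe as ``a class of VC dimension $\widetilde O_d(\varepsilon^{-(d-1)/2})$'' is not quite the right object; Bronshtein gives a \emph{bracketing} cover of that cardinality, and the localization you need is most naturally phrased through bracketing entropy integrals rather than VC dimension plus Talagrand. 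This is a packaging issue rather than a fatal gap, but if you try to carry out the argument as written you will find the VC-dimension formulation awkward to make precise.
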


\begin{proof} The proof follows Section~3 of \cite{CDSS18}, except that we need to replace Lemma~10 of that paper with Lemma \ref{lem:DK} and that we use $h^{\ast}$ in place of $\mle$. We will sketch the proof here and highlight the modified components of that proof.

Lemma 10 of \cite{CDSS18} had that, except with probability $\tau/3$, for all convex sets $C$, $|f_n(C)-f_0(C)| \leq \eps/32 \ln(100n^4/\tau^2)$. We take $n= \Omega_d \left( (1/\eps) \ln(1/(\eps\tau)  \right)^{(d+1)/2}$ and $t= \sqrt{ \ln(6/\tau)/2}$ in Lemma \ref{lem:DK} and so $n^{-2/(d+1)} = O_d(\eps/\ln(1/\eps\tau))=O_d(\eps/\ln(n/\tau)$ and $t/\sqrt{n} \leq \sqrt{\ln(\tau)} (\eps/(\ln(\eps\tau)))^{-(d+1)/2} \leq O(\eps/\ln(n/\tau))$ for $d \geq 2$. 
With a sufficiently large constant in the $\Omega_d$, we obtain that $|f_n(C)-f_0(C)| \leq \eps/K \ln(100n^4/\tau^2)$ except with probability $\tau/3$ where $K$ is a constant large enough to make the subsequent proof work.

This gives the improved sample complexity. We now need to argue that replacing  $\mle$ with $h^{\ast}$ does not affect the proof.

Corollary 9 of \cite{CDSS18} gave that except with probability $\tau/10$, all samples lie in a set $S$, which is the set where $f_0(x) \geq p_{\min}$ for $p_{\min}=M_{f_0}/(n^4 100 /\tau^2)$, where we use the notation $M_f$ for the maximum value of a density $f$..
When this holds both $\mle$ and $h^{\ast}$ are supported on $S$. Examination of the proof of Lemma 18 from \cite{CDSS18} shows that we can relax the inequality $\ell(f) \leq \ell(f_0)$ to $\ell(f) \leq \ell(f_0) - \eps/16$ for any $f$ with maximum value $M_f$ has $M_f=\Omega(\ln(100n^4/\tau^2))$.
In partuclar, since $\ell(h^{\ast}) \geq \ell(\mle))-\eps/16 \geq \ell(f_0) - \eps/16$, we have $M_{h^{\ast}} = O(\ln(100n^4/\tau^2))$.

Then we define $g_h(x)$ supported on $S$ as the normalisation of $\max\{p_{\min},h^{\ast}(x)\}$ for $x \in S$. The proof of Lemma 17 in \cite{CDSS18} required only that $\mle$ is supported on $S$ and so we can obtain the same result for $g_h$ and $h^{\ast}(x)\}$ i.e. that $g_h(x) = \alpha \max\{p_{\min},h^{\ast}(x)\}$ for $1-\eps/32 \leq \alpha \leq 1$ and that the total variation distance is small,
\begin{equation} \label{eq:thing}
\dtv(g_h,h^{\ast}) \leq 3\eps/64 \; .
\end{equation}

Note that since the superlevel sets of $\ln \max\{p_{\min},h^{\ast}(x)\}$ are convex, we can use our application of Lemma \ref{lem:DK} to bound the error in it's expectation as

\begin{align} 
|{\E}_{X \sim f_0}[1_S \ln(\max\{h^{\ast}(X), p_{\min}\})] - {\E}_{X \sim f_n}[1_S \ln(\max\{h^{\ast}(X), p_{\min}\})]| &  \leq (M_{h^{\ast}} - p_{\min})\eps/K \ln(100n^4/\tau^2) \nonumber \\
& \leq \eps/4 \label{eq:exps-close}
\end{align}
for large enough $K$.

We now follow the proof of Lemma 19 in \cite{CDSS18}. We have that
\begin{align}
{\E}_{X \sim f_0}[\ln g_h(X)] &= {\E}_{X \sim f_0}[1_S(x) \ln(\alpha \max\{p_{\min},h^{\ast}(x)\})] \nonumber \\
&\geq {\E}_{X \sim f_0}[1_S(x) \ln \max\{p_{\min},h^{\ast}(x)\}] \nnew{- \eps/16} & \text{(since \nnew{$a > 1-\eps/32$})} \nonumber \\
&\geq {\E}_{X \sim f_0}[1_S \ln(\max\{h^{\ast}(X), p_{\min}\})] \nnew{- \eps/16} \nonumber \\
&\geq {\E}_{X \sim f_n}[1_S \ln(\max\{h^{\ast}(X), p_{\min}\})] \nnew{- 3\eps/16} & \text{by (\ref{eq:exps-close})} \nonumber \\
& \geq \frac{1}{n} \sum_i \ln h^{\ast}(X_i) \nnew{- 3\eps/16} \nonumber \\
&\geq \frac{1}{n} \sum_i \ln \mle(X_i) \nnew{- \eps/4} \nonumber \\
&\geq \frac{1}{n} \sum_i \ln f_0(X_i) \nnew{- \eps/4} \nonumber \\ 
&\geq {\E}_{X \sim f_0}[\ln f_0(X)] \nnew{- 3\eps/8}. & \text{(using Lemma 14 of \cite{CDSS18})} \label{eq:end-thing}
\end{align}
Thus, we obtain that
\begin{align}
\KL(f_0 || g) = {\E}_{X \sim f_0}[\ln f_0(X)] - {\E}_{X \sim f_0}[\ln g_h(X)] \leq \nnew{3\eps/8}. \label{eq:kl_bound}
\end{align}
For the next derivation, we use that the Hellinger distance is related to the total variation distance and the Kullback-Leibler divergence in the following way: For probability functions $k_1, k_2 : \mathbb{R}^d \rightarrow \mathbb{R}$, we have that $h^2(k_1,k_2) \leq \dtv(k_1,k_2)$ and $h^2(k_1,k_2) \leq \KL(k_1 || k_2)$.
Therefore, we have that
\begin{align*}
h(f_0, h^{\ast}) &\leq h(f_0, g_h) + h(g_h, h^{\ast}) \\
&\leq \KL(f_0 || g_h)^{1/2} + \dtv(g_h, h^{\ast})^{1/2} \\
&= \nnew{(3\eps/8)^{1/2}} + \nnew{(3\eps/64)^{1/2}} & \text{(by \eqref{eq:kl_bound} and \eqref{eq:thing})} \\
&\leq \eps^{1/2} \;,
\end{align*}
concluding the proof.
\end{proof}

\end{document}